\documentclass[runningheads]{llncs}
\usepackage[T1]{fontenc}

\usepackage{amsmath,amssymb,mathtools}
\usepackage{microtype}
\usepackage{booktabs}
\usepackage{enumitem}
\usepackage{url}

\newcommand{\E}{\mathbb{E}}
\newcommand{\Prb}{\mathbb{P}}

\newcommand{\ceil}[1]{\left\lceil #1\right\rceil}
\newcommand{\KL}{D}
\newcommand{\id}{\mathsf{id}}
\newcommand{\ticket}{\mathsf{ticket}}
\newcommand{\addr}{\mathsf{addr}}
\newcommand{\hash}{\mathsf{H}}
\newcommand{\Sig}{\mathsf{Sig}}
\newcommand{\Team}{\textsf{TIMELY-TEAM}}
\newcommand{\Pivot}{\Team}
\newcommand{\Nstar}{N^\star}
\newcommand{\Perm}{\mathfrak S}
\newcommand{\Sedna}{\textsc{Sedna}}
\newcommand{\PivotK}{\textsf{PIVOT-}K}
\providecommand{\Team}{\textsf{TIMELY-TEAM}}
\providecommand{\Nstar}{N^\star}

\providecommand{\Game}{\Gamma}

\title{Slack and Budget Breaking in Threshold Team
Production}
\titlerunning{Slack and Budget Breaking in Threshold Team Production}

\author{Benjamin Marsh\inst{1, 2} \and Alejandro Ranchal-Pedrosa\inst{1}}
\authorrunning{B. Marsh and A. Ranchal-Pedrosa}
\institute{Sei Labs \and University of Portsmouth\\\email{\{ben|alejandro\}@seinetwork.io}}

\begin{document}
\maketitle

\begin{abstract}
A threshold system completes a public task only after $\kappa$ verifiable shares are publicly committed. If the honest schedule creates
\(
    \Nstar=\kappa+\Delta
\) share opportunities by deadline $t^\star$, then $\Delta$ shares are slack such that a coalition delays completion if and only if it withholds at least $\Delta+1$ shares. The incentive problem is therefore to price the cheapest sabotage set. Agents receive a direct fee $f$ per committed share. A delaying coalition may also obtain delay value at most $L$, and may earn additional fee revenue during recovery after the deadline. Let $R_1^+$ be a pathwise upper bound on the coalition's incremental fee revenue in a recovery slot that completes the task, including any same-slot overshoot. The principal can post a nonnegative completion bounty that depends only on committed shares, uses no deposits or punishments, and expires if completion is late. The optimal rule is uniform, as if completion occurs by $t^\star$, every admissible horizon share receives $B/\Nstar$, otherwise no bounty is paid. Full participation is ex-post strongly delay proof exactly when
\(
    (\Delta+1)f+\frac{\Delta+1}{\Nstar}B
    \ge
    L+R_1^+ .
\) Equivalently, the exact worst-case budget is
\(
    B^\star
    =
    \frac{\Nstar}{\Delta+1}
    \bigl(L+R_1^+-(\Delta+1)f\bigr)^+ .
\) The bound is tight for every nonnegative completion-measurable bounty, among the $\Nstar$ horizon shares, some $\Delta+1$ receive total bounty at most $(\Delta+1)B/\Nstar$, and withholding precisely those shares delays completion. The result applies to threshold signatures, data availability certification, coded dissemination, and generic $k$-of-$n$ completion tasks. We also isolate a separate limit, no transfer rule based only on completed shares can remove a final slot race in which a coalition has already observed enough pre-completion shares to act.

\keywords{Mechanism design \and Team production \and Coalition proofness \and Robust mechanism design \and Blockchain incentives}
\end{abstract}

\clearpage

\section{Introduction}
Many distributed systems complete a public task only after collecting a threshold number of verifiable shares. A certificate forms after enough validators sign. A data availability layer accepts a block after enough samples or attestations appear. A coded dissemination protocol reveals a payload after enough coded shares are publicly committed. A generic $k$-of-$n$ protocol completes once enough votes or certificate shares are included~\cite{luby2002,shokrollahi2006,rabin1989,cachin2005avid,sedna2025,marsh2026mechanismdesignoverviewsedna}. In all of these cases, completion is not produced by one agent but by a threshold team. This creates a simple sabotage problem as the principal values timely completion, while individual agents receive local fees for their own shares. A coalition may prefer to withhold enough of its shares to push completion past the honest deadline. In blockchain applications, one source of such delay value is maximal extractable value (MEV): profit from transaction ordering, front-running, liquidations, arbitrage, or related use of pre-confirmation information~\cite{eskandari2019,daian2020,qin2021flashloans}. In the model we do not need the source of the value.  We only require an upper bound $L$ on the coalition's redistributable gain from delay. 
Modelling the game as running in a series of slots that accumulate shares, let $t^\star$ be the honest horizon, the first slot by which the threshold would be reached if all available agents submitted their shares. If the honest schedule creates
\(
    \Nstar=\kappa+\Delta
\) share opportunities by this horizon, then $\Delta$ shares are redundant. If a coalition withholds $W$ horizon shares, completion is delayed if and only if
\(
    W>\Delta .
\) Thus zero slack is a knife edge such that one missing share delays completion. Positive slack changes the strategic problem. The first $\Delta$ withholdings do not delay anything unless they are coordinated with at least one more withholding. Slack converts unilateral free riding into coordinated sabotage.

This is a mechanism design problem with an external budget. In Holmstr\"om's teams model, efficient effort is obstructed by budget balance because agents do not internalize the value of team output~\cite{holmstrom1982}. Here the principal is an outside beneficiary and can break budget balance by posting a completion bounty. The relevant stability notion is coalitional: a delay attack need not be profitable for each deviator separately, but it may be profitable for a coalition that can redistribute the delay value among its members. We therefore use an ex-post strong-equilibrium style condition, in the lineage of strong equilibrium and coalition-proof Nash equilibrium~\cite{aumann1959,bernheim1987}. The robust design goal is Wilsonian: the contract should not depend on estimating cartel membership, network paths, or transaction-specific private values~\cite{wilson1987,bergemann2005}. 

Our main result characterizes the exact budget needed to deter coalitional delay. We consider nonnegative completion bounties with budget $B$. The bounty may depend on committed shares, the commitment schedule, and share admissibility, but it cannot use deposits, negative transfers, unverifiable cartel membership, or transaction specific private types. The optimal rule is the uniform timely team bounty, denoted \Team: if completion occurs by $t^\star$, every admissible horizon share receives $B/\Nstar$, if completion is late, the bounty expires and the escrow is returned. With direct fee $f$, delay value bounded by $L$, and recovery slot (slot exceeding the honest horizon for completion) fee leakage bounded by $R_1^+$, full participation is ex-post strongly delay proof exactly when \( (\Delta+1)f+\frac{\Delta+1}{\Nstar}B \;\ge\; L+R_1^+ . \) Equivalently,
\(
    B
    \ge
    \frac{\Nstar}{\Delta+1}
    \bigl(L+R_1^+-(\Delta+1)f\bigr)^+ .
\) The paper makes four contributions. First, it formalizes threshold team production with slack and proves the pathwise identity that delay occurs exactly when withholding exceeds slack. Second, it gives the exact minimax completion bounty among all nonnegative completion measurable transfer rules. Third, it shows how recovery fees, sender pricing, slack uncertainty, and exclusion ratchets affect the required budget. Fourth, it proves a separate final slot visibility impossibility, if a coalition has already observed enough pre-completion shares to act before completion, no transfer rule depending only on committed shares can eliminate that physical race.

\section{Model}
\label{sec:model}

The primitive object is a threshold public good, where a \emph{share} is any verifiable contribution that moves the system one unit closer to completion. A signature share, a data availability attestation, a coded data share, a vote, or another admissible proof. The project completes once at least $\kappa$ admissible shares have been publicly committed. There are $n$ agents and an outside beneficiary. Time is slotted. In slot $t$, an availability process selects a set $\mathcal A_t$ of agents who can each submit one admissible share. Write
\(
    a_t=|\mathcal A_t|,
    \qquad
    M_t=\sum_{u\le t}a_u .
\)
If every available agent submits, the first slot in which the threshold can be reached is
\(
    t^\star=\inf\{t:M_t\ge \kappa\}.
\)
We call $t^\star$ the honest horizon.  We call the availability process \emph{exchangeable} if the law of $(\mathcal{A}_t)_t$
is invariant under any relabeling of agents. The homogeneous sampling model is exchangeable. The number of admissible share opportunities available by that horizon is
\(
    \Nstar=M_{t^\star}.
\)
The slack is
\(
    \Delta=\Nstar-\kappa .
\)
Thus the honest path contains $\Delta$ redundant share opportunities by the deadline. A coalition deviates by switching some of its available share opportunities from submission to withholding. Agents outside the tested coalition continue to submit. If the coalition withholds $W_{t^\star}$ shares by the honest horizon, then the number of committed shares at $t^\star$ is
\(
    \Nstar-W_{t^\star}
    =
    \kappa+\Delta-W_{t^\star}.
\)
Hence completion is delayed past $t^\star$ if and only if
\(
    W_{t^\star}>\Delta .
\) This identity is the basic geometry of the model. Zero slack makes delay unilateral, positive slack means the first $\Delta$ withholdings do not delay anything unless they are coordinated with at least one more withholding. In the homogeneous case used for probability calculations below, $a_t=m$ for every slot. Then
\(
    t^\star=\ceil{\kappa/m},
    \qquad
    \Nstar=t^\star m,
    \qquad
    \Delta=t^\star m-\kappa .
\) The same notation applies across several systems. Threshold signatures where shares are signature shares, data availability where shares are attestations or coded data shares, coded dissemination where shares are agent bound coded fragments, and $k$-of-$n$ completion where shares are votes or certificate fragments. The incentive problem is the same in all cases: the principal needs $\kappa$ committed shares by $t^\star$, while a coalition may prefer to delete enough of its own share opportunities to make completion late.

\paragraph{Coded dissemination as an instantiation.}
In coded dissemination, a share is an admissible agent-bound coded fragment. A transaction completes once $\kappa$ such fragments have been publicly committed. The coding layer determines how fragments map to symbols and how many symbols are sufficient for reconstruction, but these details are not primitives of the incentive theorem. What matters for the model is that each admissible fragment is verifiable, bound to a particular agent or validity window, and counts as one share toward the threshold. Appendix~\ref{app:sedna-specifics} gives a concrete instantiation using the \Sedna-specific ticket format, symbol-resolution convention, and rounding details.

\paragraph{Admissibility.}
We assume admissibility is publicly verifiable. A submitted share counts for completion, fees, and bounty payments only if it satisfies the committed validity rule. In coded dissemination this is enforced by agent-bound tickets: copied fragments or fragments submitted by the wrong agent do not count. This prevents reward sniping and lets the bounty be defined over committed shares. In the main model we take tickets to be \emph{slot bound}, a ticket is admissible only in its issued slot and expires thereafter. Slot binding is what makes the fee genuinely forfeited rather than deferred, so a withheld share cannot be redeemed for $f$ in a recovery slot. Strict slot binding carries a liveness cost in heterogeneous latency populations, an honest agent that misses its issued slot to a transient delay forfeits $f$ even though it could still contribute before the horizon. The safe relaxation is a fee ticket grace window that closes at $t^\star$, under which a slot $t$ ticket is redeemable in any slot up to $t^\star$ but never in a recovery slot. This leaves Lemma~\ref{lem:loss} intact, a coalition that withholds through $t^\star$ to force delay still forfeits $f$, since redeeming inside the window would land the share by $t^\star$ and complete on time, whereas a window reaching into recovery would let a withholder recapture $f$ and reopen the leak measured by $R_1^+$. Adopting the timely fee prefix of Section~\ref{sec:pricing-dynamics} as the default fee rule achieves the same liveness effect while also setting $R_1^+=0$. Each admissible submitted share pays direct revenue $f>0$. We assume static nonnegative marginal fees in the main model, including an additional admissible pre-horizon share cannot reduce the agent's other revenue before considering delay value. Appendix~\ref{app:negative-fees} states the weaker net marginal condition needed in fee markets with displacement costs.

\paragraph{Agents, coalitions, and contacts.}
All agents are strategic. Their on-path action is to submit an admissible share when selected, a deviation is a switch from submission to withholding for some selected opportunities. For probability statements, fix a coalition $C$ controlling $\beta n$ agents and call it the cartel. In the homogeneous sampling model, let $A_t$ be the number of agents in $C$ selected in slot $t$,
\(
    A_t\sim \mathrm{Hypergeom}(n,\beta n,m),
    \qquad
    S_t=\sum_{u\le t} A_u .
\) Sampling is without replacement within a slot and with replacement across slots, so the variables $A_t$ are independent across slots. A coalition policy $\pi$ chooses, for each slot $t$, a number
\(
    X_t^\pi\in\{0,\ldots,A_t\}
\) of its selected shares to submit. In the ex-post stability test below, agents outside the deviating coalition continue to submit. They are not assumed mechanically obedient, they are simply not members of the tested deviation. The coalition's withheld count and the public completed count by time $t$ are
\(
    W_t^\pi=S_t-\sum_{u\le t}X_u^\pi,
    \qquad
    U_t^\pi=tm-W_t^\pi .
\) The public completion time is
\(
    T^\pi=\inf\{t:U_t^\pi\ge\kappa\}.
\) Under full submission, $T=t^\star=\ceil{\kappa/m}$ deterministically and $\Delta=t^\star m-\kappa$.

\paragraph{Payoffs, game, and solution concept.}
The extensive form $\Game(B)$ is as follows. First, the sender commits to the availability schedule, admissibility rule, fee tickets, and bounty rule. Second, selected agents observe their own shares; members of a coalition observe their collective selections and can make side payments among themselves. Third, in each slot, every selected agent chooses submission or withholding. Committed shares determine whether and when the threshold is reached, and transfers are paid according to the committed rule. Finally, if the coalition has created a delay, it may extract value from the resulting information lead or timing advantage. Let $L$ be a transaction specific upper bound on the redistributable value available to any coalition that delays completion past $t^\star$. Direct revenues are measured in the same units, common discount factors can be folded into $f$, $B$, and the recovery term. Delay can also change fee revenue. After $t^\star$, the sender may continue routing shares until the threshold is reached. Some of those recovery shares may be submitted by the delaying coalition, creating fee revenue that is absent from the on-time baseline. We account for this separately from $L$. Consider a deviation that withholds $c$ pre-horizon shares. Since the honest path has slack $\Delta$, the post-horizon deficit is
\(
    d=c-\Delta>0 .
\) Let $R_{\rm rec}(c)$ denote the coalition's maximum incremental fee revenue after $t^\star$ when the sender continues until completion. Before the final recovery slot, fewer than $d$ recovery shares can have been committed; otherwise the task would already have completed. We therefore isolate the only potentially large term. The recovery slot in which completion occurs. Let $R_1^+$ be a pathwise upper bound on the coalition's incremental fee revenue in that slot, including any same-slot shares included after the threshold crossing share. In the default fee-leak model, $R_1^+\le mf$ pathwise and $\E[R_1^+]\approx \beta mf$ under fresh uniform sampling. If recovery fees are burned, returned to the sender, or otherwise not paid to post-horizon submitters, then $R_1^+=0$. The timely completion bounty below pays no bounty after $t^\star$, so recovery accounting concerns fees only.

\begin{definition}[Ex-post strong delay proofness]
Fix a realized contact path up to $t^\star$ and a recovery fee model. The full submission strategy profile is \emph{ex-post strongly delay proof} if there is no coalition of agents that can switch some of its selected opportunities from submission to withholding, obtain $T>t^\star$, and weakly increase every coalition member's payoff with at least one strict increase under arbitrary internal side payments funded by at most $L+R_{\rm rec}(c)$, where $c$ is the number of pre-horizon withholdings in the deviation.
\end{definition}
This is a strong equilibrium style, pathwise core condition for the submission outcome of $\Game(B)$. The probabilistic parameter $\beta$ matters for pricing and attack likelihood; the main incentive theorem below is conditional on the realized path and applies to any coalition that has enough selected opportunities to attempt delay. The condition presumes transferable utility: the delay value $L$ can be redistributed frictionlessly among coalition members, which is exactly what reduces coalitional stability to a comparison of aggregate opportunity cost against $L+R_1^+$. Any friction in the side payment technology only shrinks the set of jointly profitable deviations, so $B^\star$ is an upper bound on the budget required under frictional redistribution.

\section{Slack Geometry of Withholding}
\label{sec:geometry}

The key accounting identity is independent of the cartel's dynamic policy.

\begin{theorem}[Delay is exactly withholding beyond slack]
\label{thm:delay}
For every contact path and every coalition policy $\pi$,
\(
    T^\pi>t^\star
    \quad\Longleftrightarrow\quad
    W_{t^\star}^\pi>\Delta .
\) Consequently, any delay-inducing policy must withhold at least $\Delta+1$ shares before $t^\star$, and this number is sufficient whenever the coalition controls at least $\Delta+1$ selected horizon shares.
\end{theorem}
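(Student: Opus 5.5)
The plan rests on monotonicity of the public count $U_t^\pi$ in $t$, together with the identity $U_{t^\star}^\pi=\Nstar-W_{t^\star}^\pi$. We work in the general availability model, writing $U_t^\pi=M_t-W_t^\pi$; the homogeneous case $M_t=tm$ is the special case. First check that $U_t^\pi$ is nondecreasing in $t$. Each slot adds $a_t$ new opportunities, of which at most the coalition's selected ones are withheld. Hence the increment $U_t^\pi-U_{t-1}^\pi=a_t-(A_t-X_t^\pi)\ge 0$, since $X_t^\pi\ge 0$ and $A_t\le a_t$. With this, $T^\pi=\inf\{t:U_t^\pi\ge\kappa\}$ satisfies $T^\pi\le t^\star$ iff $U_{t^\star}^\pi\ge\kappa$. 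The forward direction is immediate from monotonicity. The converse holds because $U_{T^\pi}^\pi\ge\kappa$ and $T^\pi\le t^\star$ give $U_{t^\star}^\pi\ge U_{T^\pi}^\pi\ge\kappa$.

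Next, substitute $U_{t^\star}^\pi=M_{t^\star}-W_{t^\star}^\pi=\Nstar-W_{t^\star}^\pi=\kappa+\Delta-W_{t^\star}^\pi$. Then $U_{t^\star}^\pi<\kappa$ iff $W_{t^\star}^\pi>\Delta$. Taking negations of the previous equivalence yields $T^\pi>t^\star\iff W_{t^\star}^\pi>\Delta$. This holds pathwise, for every contact path and every (possibly adaptive) policy, because only the realized counts enter the argument.

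For the consequences, necessity is the forward implication, since $W_{t^\star}^\pi$ is an integer and so $W_{t^\star}^\pi\ge\Delta+1$. For sufficiency, suppose the coalition controls at least $\Delta+1$ selected opportunities in slots $u\le t^\star$. Construct $\pi$ by withholding exactly $\Delta+1$ of them, chosen arbitrarily, and submitting everything else. This gives $W_{t^\star}^\pi=\Delta+1>\Delta$, so $T^\pi>t^\star$ by the equivalence. Since $W$ is cumulative, withholding in any slot up to $t^\star$ counts, and the timing of the withholdings does not matter.

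The only step requiring care is the monotonicity and its use at the deadline. One must ensure that completion before $t^\star$ cannot be "undone" and that $t^\star$ is defined from honest counts. The first holds because committed shares are never removed, and I would state this explicitly.
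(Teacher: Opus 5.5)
Your proof is correct and takes the paper's route: compare the identity $U_{t^\star}^\pi=\kappa+\Delta-W_{t^\star}^\pi$ against $\kappa$, with two additions the paper omits, namely the explicit monotonicity of $U_t^\pi$ and the explicit sufficiency construction. One small slip: $U_{t^\star}^\pi\ge\kappa\Rightarrow T^\pi\le t^\star$ follows directly from the definition of $T^\pi$ as an infimum, and monotonicity is needed only for $T^\pi\le t^\star\Rightarrow U_{t^\star}^\pi\ge\kappa$, so your ``forward'' and ``converse'' arguments are the same argument and the trivial direction goes unstated.
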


\begin{proof}
At the honest horizon, the full submission path contains
\(
    \Nstar=\kappa+\Delta
\) admissible horizon shares. Under policy $\pi$, exactly $W_{t^\star}^\pi$ of these shares are missing, so
\(
    U_{t^\star}^\pi
    =
    \kappa+\Delta-W_{t^\star}^\pi .
\) The threshold has not been reached by $t^\star$ if and only if this quantity is strictly below $\kappa$, equivalently $W_{t^\star}^\pi>\Delta$.
\end{proof}

\begin{corollary}[Knife edge]
If $\Delta=0$, then one withheld cartel share before $t^\star$ delays completion. Under full withholding, the delay probability is
\(
    q_0=\Prb[S_{t^\star}>0]
    =1-\left(\frac{\binom{(1-\beta)n}{m}}{\binom{n}{m}}\right)^{t^\star} .
\)
\end{corollary}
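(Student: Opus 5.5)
The corollary has two parts, and I will treat them in order. The first is the knife-edge claim, which I will read off Theorem~\ref{thm:delay} with $\Delta=0$. The theorem gives $T^\pi>t^\star$ if and only if $W_{t^\star}^\pi>0$, that is, $W_{t^\star}^\pi\ge 1$. So a single withheld cartel share before $t^\star$ is enough to delay completion. It is also feasible whenever the cartel controls at least one selected horizon share, by the sufficiency clause of the theorem.

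For the probability, I will fix the full-withholding policy, $X_t^\pi=0$ for all $t$. This gives $W_{t^\star}^\pi=S_{t^\star}$, so by the first part the delay event equals $\{S_{t^\star}>0\}$, and $q_0=\Prb[S_{t^\star}>0]=1-\Prb[S_{t^\star}=0]$.

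Because each $A_t\ge 0$, the event $\{S_{t^\star}=0\}$ is the intersection of the events $\{A_t=0\}$ for $t=1,\dots,t^\star$. In the homogeneous model $t^\star=\ceil{\kappa/m}$ is deterministic, and the $A_t$ are independent across slots because sampling is with replacement across slots. Hence $\Prb[S_{t^\star}=0]=\prod_{t\le t^\star}\Prb[A_t=0]$.

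It remains to evaluate one factor. Since $A_t\sim\mathrm{Hypergeom}(n,\beta n,m)$, the probability that none of the $m$ agents drawn without replacement belongs to the cartel is $\binom{(1-\beta)n}{m}/\binom{n}{m}$, with the convention that this is $0$ when $m>(1-\beta)n$. Raising this to the power $t^\star$ and subtracting from $1$ gives the stated formula.

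No step is a genuine obstacle. The only point needing care is that $t^\star$ is nonrandom in the homogeneous model, so the product over slots has a fixed number of factors. One should also check that the full-withholding policy delays completion exactly on $\{S_{t^\star}>0\}$, neither more nor less. This follows from the ``if and only if'' in Theorem~\ref{thm:delay}.
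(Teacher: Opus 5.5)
Your proof is correct and follows the route the paper implicitly relies on. You specialize Theorem~\ref{thm:delay} to $\Delta=0$, identify the full-withholding delay event with $\{S_{t^\star}>0\}$, and combine independence of the $A_t$ across slots with $\Prb[A_t=0]=\binom{(1-\beta)n}{m}/\binom{n}{m}$; the paper states the corollary without a separate proof, and your argument supplies exactly these omitted details.
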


\begin{lemma}[Full withholding maximizes delay probability]
\label{lem:full-withholding}
For every dynamic policy $\pi$, $\Prb[T^\pi>t^\star]\le \Prb[T^{0}>t^\star]$, where $T^{0}$ is the completion time under full withholding of every cartel share.
\end{lemma}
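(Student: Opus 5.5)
The proof is a pathwise comparison, after which the probability inequality follows by monotonicity of expectation. Fix the contact path, i.e.\ the realized selections $(A_t)_{t\le t^\star}$ and hence $S_{t^\star}$. The honest horizon $t^\star$ and the slack $\Delta$ depend only on $m$ and $\kappa$. They are therefore the same under $\pi$ and under full withholding.

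By Theorem~\ref{thm:delay}, the event $\{T^\pi>t^\star\}$ coincides with $\{W^\pi_{t^\star}>\Delta\}$. The same holds for the full-withholding policy: $\{T^0>t^\star\}=\{W^0_{t^\star}>\Delta\}$.

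Next compare the withheld counts. From $W_t^\pi=S_t-\sum_{u\le t}X_u^\pi$ and $X_u^\pi\ge 0$ we get
\[
W^\pi_{t^\star}\le S_{t^\star}=W^0_{t^\star}
\]
on every path, since full withholding sets $X_u^0=0$. Hence $\{T^\pi>t^\star\}\subseteq\{T^0>t^\star\}$ pathwise, and taking probabilities gives the claim.

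The only point needing care is that $\pi$ may be adaptive. It can depend on past public commitments and on the coalition's observed selections, and possibly on its own randomization. The inclusion, however, is an inequality between quantities defined realization by realization. It requires no independence or stopping-time argument. I would simply note that $X_u^\pi\in\{0,\dots,A_u\}$ holds for every realization of the policy's information and randomness. Then the inclusion holds on the joint probability space, and $\Prb$ is monotone under inclusion.

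I would also remark that selections after $t^\star$ are irrelevant, because Theorem~\ref{thm:delay} reduces the event to pre-horizon counts. This last remark is the step I would state most explicitly, since it is what lets the adaptivity be ignored.
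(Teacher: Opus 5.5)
Your proof is correct and is essentially the paper's argument: both couple $\pi$ and full withholding on the same contact path and use that committing cartel shares can only help completion. The paper states this as $U_t^\pi\ge U_t^0$ for every $t$, hence $T^\pi\le T^0$ pathwise, which is slightly stronger because it dominates the whole completion-time law. You instead route through Theorem~\ref{thm:delay} and compare $W^\pi_{t^\star}\le S_{t^\star}=W^0_{t^\star}$, which is the same inequality evaluated at the horizon.
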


\begin{proof}
Publicly committing cartel shares can only increase the public count $U_t$ pathwise. Hence $T^\pi\le T^0$ pathwise.
\end{proof}

For random agent sampling, the exact full withholding delay probability is $q^{(0)}=\Prb[S_{t^\star}>\Delta]$. The following concentration bound is often enough for parameter selection. Let
\(
    \theta=\frac{\Delta}{t^\star m}.
\) Since each hypergeometric draw is negatively associated and dominated in moment generating function by $\mathrm{Bin}(m,\beta)$, standard Chernoff optimization gives binomial KL exponents~\cite{hoeffding1963,chvatal1979,boucheron2013}.

\begin{proposition}[KL delay bounds]
\label{prop:kl}
For $\theta>\beta$,
\(
    q^{(0)}=\Prb[S_{t^\star}>\Delta]
    \le \exp\{-t^\star m\,\KL(\theta\Vert\beta)\}.
\)
For $\theta<\beta$,
\(
    1-q^{(0)}=\Prb[S_{t^\star}\le\Delta]
    \le \exp\{-t^\star m\,\KL(\theta\Vert\beta)\}.
\) The same statement holds for a stationary partial-withholding proxy after replacing $\Delta$ by $\Delta/(1-w)$.
\end{proposition}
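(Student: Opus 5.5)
The plan is a standard Chernoff argument applied to $S_{t^\star}=\sum_{u\le t^\star}A_u$, a sum of $t^\star$ independent hypergeometric variables with total draw count $t^\star m$. The first step is to show that each $A_u\sim\mathrm{Hypergeom}(n,\beta n,m)$ satisfies $\E[e^{\lambda A_u}]\le(1-\beta+\beta e^{\lambda})^m$ for every real $\lambda$. This is Hoeffding's classical comparison of sampling without replacement to sampling with replacement: $x\mapsto e^{\lambda x}$ is convex, so the hypergeometric is dominated in convex order by $\mathrm{Bin}(m,\beta)$. Since the $A_u$ are independent across slots, the moment generating function factorizes, giving $\E[e^{\lambda S_{t^\star}}]\le(1-\beta+\beta e^\lambda)^{t^\star m}$. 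In other words, $S_{t^\star}$ is MGF-dominated by $\mathrm{Bin}(t^\star m,\beta)$ for both signs of $\lambda$.

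For the upper tail with $\theta>\beta$, Markov's inequality gives, for $\lambda>0$,
\[
\Prb[S_{t^\star}>\Delta]\le\Prb[S_{t^\star}\ge \theta t^\star m]\le \exp\{-t^\star m(\lambda\theta-\log(1-\beta+\beta e^\lambda))\}.
\]
Optimizing at $e^{\lambda}=\theta(1-\beta)/(\beta(1-\theta))$, which is positive exactly because $\theta>\beta$, makes the exponent equal to $\KL(\theta\Vert\beta)$. The lower tail with $\theta<\beta$ is symmetric. For $\lambda<0$, Markov's inequality bounds $\Prb[S_{t^\star}\le \theta t^\star m]$ by the same expression, and the same optimizer now yields $\lambda<0$ and again exponent $\KL(\theta\Vert\beta)$. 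The edge cases $\theta\in\{0,1\}$ need a separate look: with $\theta=0$ the bound reads $(1-\beta)^{t^\star m}$, which follows from the limiting argument or directly. The strict versus non-strict inequality only helps.

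For the partial-withholding proxy, I read a ``stationary'' policy as withholding a fraction $w$ of selected cartel shares, so $W_{t^\star}\approx wS_{t^\star}$. By Theorem~\ref{thm:delay}, delay holds iff $W_{t^\star}>\Delta$, i.e.\ iff $S_{t^\star}>\Delta/w$. Here I would flag that the stated replacement $\Delta/(1-w)$ matches the convention in which $w$ is the \emph{submission} fraction, so that $1-w$ is the withheld fraction; I would fix the notation that way. Then the same Chernoff computation applies with $\theta$ replaced by $\theta/(1-w)$.

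The main obstacle is the first step. I need to justify cleanly that the MGF domination holds for negative $\lambda$ as well, which convex ordering gives, rather than merely citing negative association. I also need to be precise that the proxy statement is heuristic, since rounding of $wA_u$ means the true withheld count is only approximately $wS_{t^\star}$.
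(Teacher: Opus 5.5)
Your proposal is correct and follows essentially the paper's own argument (Appendix~\ref{app:kl}): Hoeffding's without-replacement comparison to $\mathrm{Bin}(m,\beta)$, factorization of the moment generating function across independent slots, and Chernoff optimization to the binomial KL exponent. The proxy is handled exactly as in Appendix~\ref{app:stationary}, where $w$ is indeed the \emph{committed} fraction and $\theta_w=\Delta/((1-w)t^\star m)$. Your explicit appeal to convex ordering for both signs of $\lambda$ is slightly more careful than the paper, which states the domination only for $\lambda>0$ and then says the lower tail follows ``analogously''.
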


The dependence on $\Delta$ produces a sawtooth as $\kappa$ varies. Within a fixed horizon $t^\star$, increasing $\kappa$ reduces slack and increases delay risk; when $\kappa$ crosses a multiple of $m$, a new slot opens and the slack resets. 
This yields a first design implication, avoid $m\mid\kappa$ unless the bounty is priced for a unilateral attack.

\section{Optimal Timely Completion Bounties}
\label{sec:pivot}
A completion bounty is sender funded. It is not a Groves or Clarke tax that elicits values~\cite{clarke1971,groves1973}, it is an external prize for a verifiable threshold event. The principal breaks budget balance because the benefit of timely completion accrues outside the team.

\begin{definition}[completion measurable bounty]
A budget $B$ completion bounty is a transfer rule $\varphi$ that maps committed shares to nonnegative payments for admissible shares. It is \emph{completion measurable} if payments depend only on the committed shares, the commitment schedule, and the validity of tickets or contribution proofs. It is \emph{timely} if all bounty payments are zero whenever $T>t^\star$. It is \emph{budget feasible} if total bounty paid is at most $B$ on every history. We exclude negative transfers and deposits in the main model.
\end{definition}

The optimal rule in this class is the uniform timely team bounty. The sender escrows $B$. If $T\le t^\star$, every admissible share committed no later than $t^\star$ receives $B/\Nstar$, where $\Nstar$ is the committed number of horizon opportunities. Any unpaid balance is returned to the sender. If $T>t^\star$, no bounty is paid and the escrow is returned. Fees are paid independently on admissible inclusions. In the coded dissemination instantiation, admissibility is enforced by agent bound tickets. A share occurrence is paid only if its ticket is valid, unspent, tied to the agent and slot, and publicly committed by the corresponding agent address. A copied symbol without the agent bound ticket is ignored for both decoding and bounty. The rule does not require estimating cartel size, delay value, agent quality, or transaction specific externalities. The timeliness condition is essential. A rule that pays after eventual completion lets a delaying coalition recover bounty in the recovery slot. \Team\ instead rewards on-time team output. It pays redundant horizon contributions as well as threshold crossing contributions because slack is exactly what makes sabotage coalitional, when $\Delta>0$, the attack deletes $\Delta+1$ shares, and every deleted horizon share should carry opportunity cost.

\begin{lemma}[Direct loss from deleting horizon shares]
\label{lem:loss}
Fix a contact path. Under \Team, if a coalition withholds $c$ admissible share opportunities before $t^\star$, then its pre-recovery direct revenue loss is at least
\(
    c f+\frac{c}{\Nstar}B .
\)
In particular, any minimal delay attack with $c=\Delta+1$ loses at least
\(
    (\Delta+1)f+\frac{\Delta+1}{\Nstar}B .
\)
\end{lemma}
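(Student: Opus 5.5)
The plan is to compare, on a fixed contact path, the coalition's direct revenue under full submission with its revenue under the deviation, restricted to the pre-recovery window $[1,t^\star]$. The two revenue channels are the fee $f$ per admissible committed share and the \Team\ bounty. I will bound the loss in each channel separately and then add the two bounds.

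For the fee channel, I will use slot binding of tickets. Each of the $c$ withheld opportunities carries a ticket that is admissible only in its issued slot, or at most through $t^\star$ under the grace-window relaxation. A withheld share therefore cannot be redeemed for $f$ before the horizon, and no share committed after $t^\star$ is counted in the pre-recovery revenue. Any recovery-slot fees are exactly the separately accounted term $R_{\rm rec}(c)$. The coalition's other shares earn the same fees as before. By the static nonnegative marginal fee assumption, removing a pre-horizon share cannot raise the agent's other pre-horizon revenue. So the fee loss is at least $cf$.

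For the bounty channel, I will split into two cases according to Theorem~\ref{thm:delay}.
\begin{itemize}
\item If $c>\Delta$, then $T>t^\star$, so \Team\ pays nothing. The coalition loses its entire on-path bounty, which is $B/\Nstar$ times the number of horizon shares it would have committed. That number is at least $c$, since every withheld opportunity is one of its horizon shares. Hence the bounty loss is at least $cB/\Nstar$.
\item If $c\le\Delta$, completion is still timely and each remaining coalition share still receives $B/\Nstar$. The withheld ones receive nothing, so the bounty loss is exactly $cB/\Nstar$.
\end{itemize}
One point needs care: $\Nstar$ in the payment rule must be the \emph{committed} number of horizon opportunities, fixed by the availability schedule and not by realized submissions. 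Otherwise withholding could shrink the denominator and raise the per-share payment to the remaining shares. I will state this explicitly from the definition of \Team: the schedule is committed up front, so $\Nstar=M_{t^\star}$ is independent of $\pi$.

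Summing the two channels gives a loss of at least $cf+cB/\Nstar$. Substituting $c=\Delta+1$, the minimal delaying number from Theorem~\ref{thm:delay}, gives the second claim. I expect the main obstacle to be the fee step, in two respects. First, I must rule out indirect fee gains before the horizon, where withholding displaces others or changes the fee market; this is where the static nonnegative marginal fee assumption is invoked. Second, I must make sure that grace-window redemptions are treated as non-withholdings. This holds because a grace-window redemption lands by $t^\star$ and so counts as a submission, not a withheld opportunity.
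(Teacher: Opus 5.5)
Your proposal is correct and follows the paper's argument. Each withheld horizon share forfeits its fee $f$ and its \Team\ share $B/\Nstar$, timeliness prevents recapturing the bounty in recovery, and any other committed coalition shares only add to the loss on delay. You also make explicit what the paper leaves implicit: the timely versus delayed case split ($c\le\Delta$ versus $c>\Delta$), the fact that $\Nstar$ is fixed by the committed schedule rather than by realized submissions, and the roles of slot binding and nonnegative marginal fees in the fee step.
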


\begin{proof}
Each withheld share forfeits its pre-horizon fee $f$. Under full submission, each horizon contribution also receives the bounty share $B/\Nstar$. Since the bounty expires on delay, a withheld horizon contribution cannot recapture its share in recovery. Counting only the withheld shares gives the stated lower bound; a coalition that also owns publicly committed shares at the horizon by agents may lose more when the bounty expires.
\end{proof}

\begin{lemma}[Symmetrization is without loss for worst case design]
\label{lem:symmetrization}
Consider the class of nonnegative completion-measurable timely completion bounties with budget $B$. Under an exchangeable availability process, for every contract $\varphi$ there is an anonymous contract $\bar\varphi$ whose worst case coalition deviation gain is weakly smaller.
\end{lemma}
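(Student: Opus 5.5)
We plan to prove the lemma by averaging $\varphi$ over relabelings of agents and then using concavity of the worst case. For a permutation $\sigma\in\Perm_n$ of agent labels, let $\varphi^\sigma$ be the contract that pays a history $h$ what $\varphi$ pays the relabeled history $\sigma^{-1}h$. We then define
\[
\bar\varphi=\frac{1}{n!}\sum_{\sigma\in\Perm_n}\varphi^\sigma .
\]
The first step is to check that $\bar\varphi$ stays in the admissible class. Nonnegativity, completion measurability and timeliness are each preserved by relabeling. Relabeling does not change whether $T\le t^\star$, and the commitment schedule and ticket validity are permuted consistently. These properties are also preserved by convex combination. Budget feasibility holds because each $\varphi^\sigma$ pays at most $B$ on every history, so their average does as well. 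By construction $\bar\varphi$ is invariant under every relabeling, which is what we mean by anonymous.

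The second step is to write the worst-case deviation gain as a supremum of functions that are affine in the contract. Fix a realized contact path $h$, a coalition $C$ and a deviation $\pi$ of $C$. The coalition's net gain is
\[
G(\varphi;h,C,\pi)=L+R_{\rm rec}(c)-\Phi(h,C,\pi)-\Psi_\varphi(h,C,\pi),
\]
where $\Phi$ collects the forfeited fees, which do not depend on $\varphi$. The term $\Psi_\varphi$ is the bounty the coalition receives on the full-submission path minus the bounty it receives under $\pi$. Since transfers add up, $\Psi_\varphi$ is linear in $\varphi$, so $G$ is affine in $\varphi$. We define the worst-case gain as $V(\varphi)=\sup_{(h,C,\pi)}G(\varphi;h,C,\pi)$, taken over the support of the availability process. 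As a supremum of affine functions, $V$ is convex.

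The third step is invariance, and it is where exchangeability enters. Relabeling agents maps the set of triples $(h,C,\pi)$ bijectively onto itself. Exchangeability ensures that $\sigma h$ lies in the support whenever $h$ does, and that $L$, $R_{\rm rec}$ and the fee terms are label-free. It follows that $G(\varphi^\sigma;h,C,\pi)=G(\varphi;\sigma^{-1}h,\sigma^{-1}C,\sigma^{-1}\pi)$, and so $V(\varphi^\sigma)=V(\varphi)$. Convexity then gives
\[
V(\bar\varphi)\le \frac{1}{n!}\sum_\sigma V(\varphi^\sigma)=V(\varphi).
\]
If the lemma is read in expectation rather than pathwise, the same argument goes through, because the expectation of a convex invariant functional under an exchangeable law is still convex and invariant.

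We expect the main obstacle to be the invariance claim. The worry is that $L$ or $R_1^+$ might secretly depend on agent identities, for example through cartel-specific extraction, or that the recovery fee model is not label-symmetric. The first thing we will try is to state explicitly that $L$ is a uniform bound over coalitions and that recovery routing is drawn from the same exchangeable process. If this fails, we will restrict the supremum to a bound $\bar L\ge L$ that is symmetric in agent labels, which only weakens the lemma harmlessly. A secondary point is that the notion of anonymity may need to allow dependence on slot positions. Averaging over agent labels only, and not over time, preserves the commitment schedule, so this dependence is allowed and causes no difficulty.
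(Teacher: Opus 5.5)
Your proposal is correct and follows the paper's proof essentially step for step: you average $\varphi$ over relabelings, observe that for each fixed path, coalition and deviation the coalition's gain is affine in the contract so that the worst-case gain is a convex supremum, and combine relabeling invariance (from exchangeability) with Jensen's inequality. You also make explicit the check that $\bar\varphi$ stays in the admissible class, which the paper leaves implicit; the only fix needed is your opening phrase, which should say convexity of the worst case rather than concavity, matching what you actually use.
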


\begin{proof}
Let $\Perm$ be the relabeling group on agents. For a relabeling $\sigma$, write $\sigma\varphi$ for the contract that first relabels the history by $\sigma$, applies $\varphi$, and then maps payments back. Define
\(
    \bar\varphi=\frac{1}{|\Perm|}\sum_{\sigma\in\Perm}\sigma\varphi .
\)
For any fixed contact path, coalition, and deviation, the coalition's transfer loss is affine in the contract. After maximizing over deviations, the deviation gain is a maximum of affine functions and hence convex in the contract. Therefore Jensen's inequality gives that the gain against $\bar\varphi$ is at most the average gain against the relabeled contracts. Exchangeability makes the worst case gain invariant under relabeling, so the averaged anonymous contract is weakly better in the minimax sense.
\end{proof}

\begin{theorem}[Exact minimax completion bounty]
\label{thm:minimax}
Fix a full submission horizon with $\Nstar=\kappa+\Delta$ admissible horizon shares. Among all nonnegative completion measurable timely completion bounties with budget $B$, the largest worst case bounty loss that can be guaranteed from any delay inducing deletion of $\Delta+1$ horizon shares is
\(
    \frac{\Delta+1}{\Nstar}B .
\)
The bound is attained by \Team.
\end{theorem}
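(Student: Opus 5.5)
The proof has two directions: an upper bound showing that no admissible contract can do better than $\frac{\Delta+1}{\Nstar}B$, and a matching lower bound showing that \Team\ achieves it.

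\emph{Upper bound.} Fix any nonnegative, completion measurable, timely, budget feasible contract $\varphi$ and the full submission path with its $\Nstar$ horizon shares. On this path the task completes by $t^\star$, so the bounty payments $\varphi_1,\dots,\varphi_{\Nstar}\ge 0$ to the horizon shares satisfy $\sum_i \varphi_i\le B$. Shares committed after $t^\star$ play no role, since none exist on the honest path. If each horizon share is viewed as a random index chosen uniformly, then a uniformly random subset $S$ of size $\Delta+1$ has
\[
\E\Bigl[\sum_{i\in S}\varphi_i\Bigr]=\frac{\Delta+1}{\Nstar}\sum_i\varphi_i\le\frac{\Delta+1}{\Nstar}B .
\]
Some subset $S^\dagger$ therefore has total payment at most this value. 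A deterministic shortcut gives the same conclusion: take the $\Delta+1$ smallest payments, whose average is at most the overall average.

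Withholding exactly the shares in $S^\dagger$ leaves $\Nstar-(\Delta+1)=\kappa-1$ committed horizon shares. By Theorem~\ref{thm:delay} this delays completion, so $T>t^\star$. Timeliness then zeroes all bounty payments. The bounty loss attributable to the deletion is at most the payments those $\Delta+1$ shares would have earned, namely $\sum_{i\in S^\dagger}\varphi_i\le\frac{\Delta+1}{\Nstar}B$. Hence no contract can guarantee a strictly larger worst case bounty loss against every delay inducing deletion of $\Delta+1$ shares.

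\emph{Lower bound.} Under \Team, every horizon share receives exactly $B/\Nstar$ on time. Lemma~\ref{lem:loss} then gives a loss of at least $\frac{\Delta+1}{\Nstar}B$ for every deletion of $\Delta+1$ shares, whichever shares are chosen. The two bounds coincide, so \Team\ attains the optimum. One could alternatively reach this conclusion through the symmetrization of Lemma~\ref{lem:symmetrization}, but the averaging argument above is direct and needs no exchangeability.

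\emph{Main obstacle.} The delicate step is pinning down what "bounty loss" means for a coalition in the upper bound. If the coalition also owns publicly committed horizon shares, timeliness makes those payments expire as well, and the loss could exceed $\sum_{i\in S^\dagger}\varphi_i$. I would resolve this by taking, as the worst case, a coalition consisting exactly of the owners of the deleted shares, with each owner holding one opportunity. The theorem's phrasing, "loss from deletion of $\Delta+1$ horizon shares", counts only the deleted shares, so the forfeited amount is exactly $\sum_{i\in S^\dagger}\varphi_i$. I would state this accounting convention explicitly, noting that it matches Lemma~\ref{lem:loss}, which also counts only the withheld shares.
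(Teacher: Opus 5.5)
Your proposal is correct and follows the paper's proof essentially step for step. The upper bound takes the $\Delta+1$ smallest payments, which sum to at most $(\Delta+1)B/\Nstar$; your random-subset averaging is an equivalent variant. Withholding exactly those shares delays completion by Theorem~\ref{thm:delay}, and the lower bound comes from \Team\ paying $B/\Nstar$ to every horizon share. Your explicit convention that the coalition controls exactly the deleted shares is the same one the paper uses: its proof says ``if a coalition controls exactly those shares,'' and it spells the point out after Theorem~\ref{thm:lower}.
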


\begin{proof}
For the upper bound, fix any budget-$B$ contract and any full submission on-time history. Let $p_1,\ldots,p_{\Nstar}$ be the bounty payments assigned to the $\Nstar$ horizon shares on that history. Since the contract is budget feasible, $\sum_i p_i\le B$. Hence the sum of the $\Delta+1$ smallest payments is at most $(\Delta+1)B/\Nstar$. If a coalition controls exactly those shares, withholding them deletes $\Delta+1$ contributions and delays completion by Theorem~\ref{thm:delay}, while forfeiting at most this much bounty. For the lower bound, \Team\ pays $B/\Nstar$ to every horizon share on the full submission path. Any delay inducing deletion contains at least $\Delta+1$ withheld horizon shares, so it forfeits at least $(\Delta+1)B/\Nstar$ in bounty. Thus no completion measurable nonnegative completion bounty can improve the worst case deletion guarantee.
\end{proof}
Concentrating the budget on the $\kappa$ pivotal shares raises the per-share reward but leaves the $\Delta$ redundant shares unpriced, and the cheapest sabotage set withholds exactly those. \Team\ instead prices every horizon share, so the minimal sabotage set forfeits $(\Delta+1)B/\Nstar > B/\kappa$ for $\kappa>1$. See Appendix~\ref{app:pivotk}.
\section{Coalition Stability, Pricing, and Ratchets}
\label{sec:coalitions}
Combining the slack geometry of Section~\ref{sec:geometry} with the bounty of Section~\ref{sec:pivot}, we establish the ex-post coalition condition under recovery accounting, then turn to deployment under slack uncertainty and exclusion ratchets.
\subsection{Ex-Post Coalition Stability}

On every path where a delay attack is possible, the coalition's minimum opportunity cost under \Team\ exceeds the value it can redistribute. Let a recovery slot be any slot that exceeds the honest horizon.

\begin{lemma}[Recovery slot farming bound]
\label{lem:recovery-farming}
Fix a delay inducing deviation with $c\ge \Delta+1$ pre-horizon withholdings and post-horizon deficit $d=c-\Delta>0$. Suppose the coalition's fee revenue from any single recovery slot is bounded by $R_1^+\le mf$, including same slot post-threshold overshoot inclusions. Then, under any recovery behavior by the cartel, $R_{\rm rec}(c) \le (d-1)f+R_1^+$.
\end{lemma}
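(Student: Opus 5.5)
The plan is to split the post-horizon timeline at the completion slot $T>t^\star$ and bound the coalition's fee revenue separately on the recovery slots strictly before $T$ and on slot $T$ itself. Slot binding, or a grace window closing at $t^\star$, guarantees that the $c$ withheld pre-horizon tickets are expired after $t^\star$. Every recovery fee the coalition earns therefore comes from a fresh recovery opportunity, and each such opportunity pays at most $f$ per committed admissible share. So it suffices to count how many admissible coalition shares can be committed in slots $t^\star+1,\dots,T$.

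\emph{Step 1: slots $t^\star<t<T$.} By Theorem~\ref{thm:delay}, the public count at $t^\star$ is $U_{t^\star}=\kappa+\Delta-c=\kappa-d$. Since completion has not occurred by slot $T-1$, we have $U_{T-1}\le\kappa-1$. Hence the total number of shares committed in slots $t^\star+1,\dots,T-1$, by anyone, is at most $(\kappa-1)-(\kappa-d)=d-1$. In particular the coalition commits at most $d-1$ of them, and its fee revenue from these slots is at most $(d-1)f$. This holds pathwise and for every recovery behaviour of the cartel, because the bound uses only the fact that the threshold had not yet been reached.

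\emph{Step 2: slot $T$.} The hypothesis gives, directly, that the coalition's fee revenue in any single recovery slot is at most $R_1^+$, and this already includes shares that land after the threshold-crossing share in the same slot. So slot $T$ contributes at most $R_1^+$. Slots after $T$ contribute nothing, since the sender stops routing once the task completes.

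\emph{Combining.} $R_{\rm rec}(c)$ is defined as incremental revenue relative to the on-time baseline, and the baseline has no recovery slots. Summing the two steps therefore gives $R_{\rm rec}(c)\le (d-1)f+R_1^+$. The only delicate point is Step 1. If withheld tickets could be redeemed after $t^\star$, a withheld share would recapture its $f$ on top of the count above. I would state explicitly that slot binding, or a window ending at $t^\star$, prevents this, and that fresh recovery shares are still capped by the deficit argument. I would also note that maximizing over the cartel's recovery strategy cannot beat the bound, since the bound holds pathwise.
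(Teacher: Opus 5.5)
Your proof is correct and follows the paper's argument. The deficit count ($U_{t^\star}=\kappa-d$ and $U_{T-1}\le\kappa-1$) caps all pre-completion recovery commitments at $d-1$, so the cartel earns at most $(d-1)f$ there, and the completion slot is charged $R_1^+$ by hypothesis. One small correction to your closing remark: Step~1 does not rely on slot binding, because a withheld ticket redeemed after $t^\star$ would itself be an admissible commitment counted in $U$ and so already fall inside the $d-1$ budget; slot binding matters instead for keeping the final-slot hypothesis $R_1^+\le mf$ and the unrecaptured loss $cf$ in Lemma~\ref{lem:binding} meaningful.
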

\begin{proof}
Consider the first recovery slot where completion occurs. Before that slot, fewer than $d$ recovery shares have been publicly committed; otherwise the post-horizon deficit would already have been filled and completion would have occurred earlier. Thus the coalition's recovery fee revenue is at most $(d-1)f$ before completion, even on the worst path where every recovery inclusion belongs to the cartel before completion. The final recovery slot contributes at most $R_1^+$ by definition, including any same-slot overshoot shares that publicly commit after the threshold crossing share. Deliberately withholding in earlier recovery slots can only reduce the fee count before completion; it cannot increase the final-slot bound or the MEV cap $L$. Hence recovery slot farming cannot exceed the above.
\end{proof}

\begin{lemma}[Minimal delay is binding with recovery accounting]
\label{lem:binding}
Assume a deviation with $c\ge \Delta+1$ pre-horizon withholdings leaves deficit $d=c-\Delta$ and that one recovery slot contributes at most $R_1^+\le mf$ to the delaying coalition. Under \Team, for every $c\ge \Delta+1$, the coalition's net direct loss after crediting recovery fees is at least  \( c f+\frac{c}{\Nstar}B-R_{\rm rec}(c) \ge (\Delta+1)f+\frac{\Delta+1}{\Nstar}B-R_1^+ . \) Thus the minimal delay attack $c=\Delta+1$ is the binding deviation.
\end{lemma}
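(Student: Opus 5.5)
The proof combines Lemma~\ref{lem:loss} with Lemma~\ref{lem:recovery-farming} and then checks that the resulting lower bound is nondecreasing in $c$. Fix a contact path and a delay-inducing deviation with $c\ge\Delta+1$ pre-horizon withholdings. By Theorem~\ref{thm:delay} such deviations are exactly those with $W_{t^\star}>\Delta$, so the deficit $d=c-\Delta$ is a positive integer.

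First, Lemma~\ref{lem:loss} says the coalition forfeits at least $cf+\frac{c}{\Nstar}B$ before recovery. Crediting back the recovery revenue gives the first inequality of the statement:
\[
\text{net loss}\ \ge\ cf+\frac{c}{\Nstar}B-R_{\rm rec}(c).
\]
Here I am using that slot-bound tickets, or a grace window closing at $t^\star$, stop withheld shares from recapturing $f$. I am also using that timeliness stops them from recapturing bounty, so $R_{\rm rec}(c)$ is the only offsetting term.

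Second, Lemma~\ref{lem:recovery-farming} gives $R_{\rm rec}(c)\le (d-1)f+R_1^+ = (c-\Delta-1)f+R_1^+$. Substituting,
\[
cf+\frac{c}{\Nstar}B-R_{\rm rec}(c)\ \ge\ (\Delta+1)f+\frac{c}{\Nstar}B-R_1^+ .
\]
The fee terms cancel exactly, apart from the $(\Delta+1)f$ that survives. Since $B\ge 0$ and $c\ge\Delta+1$, we have $\frac{c}{\Nstar}B\ge\frac{\Delta+1}{\Nstar}B$, and the displayed lower bound follows.

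Third, for the claim that $c=\Delta+1$ is the binding deviation, I would note the following. The bound for general $c$ dominates the bound at $c=\Delta+1$, and Lemma~\ref{lem:loss} is stated so that the minimal attack's loss equals this expression; the two bounds coincide at $c=\Delta+1$. So no larger deletion can have smaller guaranteed loss than the minimal one. Any profitability comparison against $L+R_{\rm rec}(c)$ therefore reduces to the minimal attack.

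The main obstacle I expect is in the first step. I must check that the $(d-1)f$ recovery fees are genuinely bounded by the fees forfeited on the extra $c-\Delta-1$ withheld shares. That requires the assumptions that each recovery share pays at most $f$ and that pre-completion recovery commits number fewer than $d$. I also need to ensure the uniform $R_1^+$ bound applies regardless of $c$, since a larger deficit could change who is sampled in the final slot. I would handle this by invoking the pathwise nature of $R_1^+$, which is assumed to hold on every path and every recovery behavior, so it does not depend on $c$.
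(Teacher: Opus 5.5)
Your proposal is correct and follows the paper's proof essentially step for step. You combine Lemma~\ref{lem:loss} with Lemma~\ref{lem:recovery-farming}, the fee terms cancel to leave $(\Delta+1)f+\frac{c}{\Nstar}B-R_1^+$, and $B\ge 0$ then makes this bound smallest at $c=\Delta+1$; the paper writes the same argument in terms of $d$ and minimizes at $d=1$. The obstacles you flag, namely slot-bound tickets and the path-independence of $R_1^+$, are already covered by the hypotheses of those two lemmas, so no additional argument is needed.
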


\begin{proof}
Let $d=c-\Delta$. By Lemma~\ref{lem:loss}, the pre-recovery direct loss is at least $(\Delta+d)f+(\Delta+d)B/\Nstar$. By Lemma~\ref{lem:recovery-farming}, recovery behavior can add at most $(d-1)f+R_1^+$ in incremental cartel fees. Therefore the net loss satisfies
\(
    N(d)
    \ge
    (\Delta+d)f+\frac{\Delta+d}{\Nstar}B-(d-1)f-R_1^+
    =
    (\Delta+1)f+\frac{\Delta+d}{\Nstar}B-R_1^+ .
\) This is minimized at $d=1$, equivalently $c=\Delta+1$.
\end{proof}

\begin{theorem}[Full submission is an ex-post strong equilibrium]
\label{thm:coalition}
Let $L=\alpha v\gamma^{t^\star}$ be an upper bound on the extractable value from delaying this transaction past the honest horizon. Suppose one recovery slot contributes at most $R_1^+\le mf$ incremental fee revenue to the delaying coalition. In the game $\Game(B)$ with \Team, if \( (\Delta+1)f+\frac{\Delta+1}{\Nstar}B \ge L+R_1^+, \) then the full submission strategy profile is ex-post strongly delay proof against every coalition, with arbitrary internal side payments bounded by the transaction's MEV and recovery revenue. Equivalently, it suffices to post
\begin{equation}
\label{eq:Bstar}
 B \ge \frac{\Nstar}{\Delta+1} \bigl(L+R_1^+-(\Delta+1)f\bigr)^+ .
\end{equation}
\end{theorem}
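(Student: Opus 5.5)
The plan is to argue by contradiction against the definition of ex-post strong delay proofness. Fix a realized contact path up to $t^\star$, and suppose some coalition has a deviation that switches some selected opportunities to withholding, achieves $T>t^\star$, and makes every member weakly better off, with at least one member strictly better off, after side payments funded by at most $L+R_{\rm rec}(c)$. Because utility is transferable within the coalition, a Pareto improvement under some redistribution exists exactly when the coalition's aggregate gain strictly exceeds its aggregate loss. The first step is therefore to reduce the coalitional condition to an aggregate one: such a deviation exists only if
\[
L+R_{\rm rec}(c) > \text{(aggregate direct revenue forfeited by the coalition)} .
\]
If the aggregate gain were at most the aggregate loss, then summing the members' payoff changes, with side payments netting to zero, would give a nonpositive total. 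That rules out every member being weakly better off with one strictly better off.

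Next I bound both sides using the earlier results. By Theorem~\ref{thm:delay}, $T>t^\star$ forces $c=W_{t^\star}\ge\Delta+1$ pre-horizon withholdings. By Lemma~\ref{lem:loss}, the forfeited fees and bounty are at least $cf+\frac{c}{\Nstar}B$. Any extra loss from expired bounty on the coalition's own committed horizon shares only strengthens the bound, and static nonnegative marginal fees guarantee that withholding never raises the coalition's other pre-horizon revenue. On the gain side, delay value is capped by $L$ by assumption. Recovery fee revenue is $R_{\rm rec}(c)\le (d-1)f+R_1^+$ by Lemma~\ref{lem:recovery-farming}, where $d=c-\Delta$. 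Lemma~\ref{lem:binding} combines these into a net loss of at least
\[
cf+\frac{c}{\Nstar}B-R_{\rm rec}(c)\ \ge\ (\Delta+1)f+\frac{\Delta+1}{\Nstar}B-R_1^+ .
\]
Under the hypothesis, the right-hand side is at least $L$. So the aggregate loss net of recovery fees weakly exceeds the delay value, which contradicts the strict inequality above. No coalition, of any size and following any dynamic policy, can profitably delay.

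The equivalence with \eqref{eq:Bstar} is then a rearrangement. Solve $(\Delta+1)f+\frac{\Delta+1}{\Nstar}B\ge L+R_1^+$ for $B$, using $\Nstar/(\Delta+1)>0$, and take the positive part, since $B\ge0$ and the condition holds for $B=0$ whenever $(\Delta+1)f\ge L+R_1^+$.

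The main obstacle is the first step: making the reduction from the pathwise strong-equilibrium definition to an aggregate inequality rigorous. Two points need care. First, the side-payment fund is bounded by $L+R_{\rm rec}(c)$, so I must check that the recovery fees being credited are the same ones Lemma~\ref{lem:binding} nets out, and are not counted twice as both a transfer source and a reduction in loss. Second, members outside the withholding set but inside the coalition may also lose bounty, and this only helps. I would handle both by writing the coalition's total payoff change as delay value plus recovery fees minus forfeited fees and bounty. I would then note that an internal redistribution cannot change this sum, so nonpositivity of the total excludes any Pareto improvement.
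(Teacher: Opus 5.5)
Your proof is correct and follows the paper's argument: Theorem~\ref{thm:delay} forces $c\ge\Delta+1$, and Lemma~\ref{lem:binding} (built on Lemmas~\ref{lem:loss} and~\ref{lem:recovery-farming}) bounds the net loss below by $(\Delta+1)f+\frac{\Delta+1}{\Nstar}B-R_1^+\ge L$, so no side-payment scheme makes the delay jointly profitable. The only difference is that you spell out two steps the paper leaves implicit: the transferable-utility reduction from the Pareto condition to an aggregate inequality, and the rearrangement into \eqref{eq:Bstar}.
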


\begin{proof}
Any successful delay requires $c\ge \Delta+1$ withholdings by Theorem~\ref{thm:delay}. Lemma~\ref{lem:binding} shows that, after crediting all recovery fees available to a $c$-withholding deviation, the coalition's net direct loss is minimized at $c=\Delta+1$ and is at least
\(
    (\Delta+1)f+\frac{\Delta+1}{\Nstar}B-R_1^+ .
\) If the equation in the theorem holds, this net loss is at least $L$, so no side payment scheme funded by this transaction's extractable value and recovery revenue can make the deviation jointly profitable.
\end{proof}

\begin{corollary}[Positive slack blocks unilateral withholding]
If $\Delta>0$, no single withheld share can cause delay. A unilateral withholder loses fee $f$ and its bounty share $B/\Nstar$, but obtains no multi-slot delay option. Hence single contribution withholding is strictly dominated whenever $f+B/\Nstar>0$.
\end{corollary}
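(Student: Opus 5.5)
The corollary follows from the slack identity, so the plan is to apply Theorem~\ref{thm:delay} and then price the unilateral deviation with Lemma~\ref{lem:loss}.

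\emph{No single withholding delays.} Fix any contact path and any policy under which exactly one cartel share is withheld before $t^\star$, so $W_{t^\star}=1$. Since $\Delta>0$ means $\Delta\ge 1$, we have $W_{t^\star}\le\Delta$. Theorem~\ref{thm:delay} then gives $T\le t^\star$: the task still completes on time. Consequently the deviation creates no delay, there is no recovery phase, and the delay value $L$ and recovery revenue $R_{\rm rec}$ are both unavailable to the withholder. This is the claim that a unilateral withholder ``obtains no multi-slot delay option.''

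\emph{Direct loss.} Because completion is timely, \Team\ still pays its bounty. Lemma~\ref{lem:loss} with $c=1$ gives a direct loss of at least $f+B/\Nstar$. To make this exact, compare the two outcomes share by share. The withheld share forfeits its fee $f$: the ticket is slot bound, or the grace window closes at $t^\star$, and on this deviation no post-horizon redemption exists. It also forfeits its bounty share $B/\Nstar$, because only committed admissible horizon shares are paid. The withholder's other shares are paid exactly as before, since completion is still timely and $\Nstar$ refers to horizon opportunities and is unchanged. Static nonnegative marginal fees ensure the omission cannot raise the agent's other revenue.

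\emph{Strict dominance.} The deviation's payoff is the baseline payoff minus $f+B/\Nstar$, and it receives no compensating delay value. When $f+B/\Nstar>0$, withholding is therefore strictly worse than submitting on every path, for every behaviour of the other agents that leaves the withholder's share count at one. The only delicate point, and what I expect to be the main obstacle, is making ``dominated'' hold uniformly against arbitrary others' actions. Here the plan is to observe that whether others' withholdings cause delay does not depend on this single share's inclusion only in the knife-edge case $W=\Delta+1$. In that case submitting instead avoids the delay, but this is a coalitional deviation, not a unilateral one. The corollary is stated for unilateral withholding with the other agents submitting, so holding the others at full submission suffices.
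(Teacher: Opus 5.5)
Your proposal is correct and follows the paper's own (implicit) argument: Theorem~\ref{thm:delay} with $W_{t^\star}=1\le\Delta$ rules out delay, Lemma~\ref{lem:loss} with $c=1$ prices the deviation at $f+B/\Nstar$, and ``strictly dominated'' is meant as a strictly unprofitable unilateral deviation from full submission, which is exactly the reading you settle on. One small correction: the static nonnegative marginal fee assumption gives a loss of at least $f+B/\Nstar$, not exactly $f+B/\Nstar$, and the lower bound is all the dominance claim needs.
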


\begin{theorem}[Matching lower bound for completion measurable bounties]
\label{thm:lower}
Fix any nonnegative completion measurable timely completion bounty with budget $B$. Suppose the contact path contains a set of $\Delta+1$ horizon shares whose total bounty under full submission is at most $(\Delta+1)B/\Nstar$, and let the coalition control exactly those shares and no other horizon share. Suppose also that the induced recovery path gives the delaying coalition incremental post-horizon fee revenue $r_{\rm rec}$. If
\(
    (\Delta+1)f+\frac{\Delta+1}{\Nstar}B
    <
    L+r_{\rm rec},
\)
then there exists a delay inducing coalition deviation that is jointly profitable under side payments. Consequently, the budget in \eqref{eq:Bstar} is the exact worst-case budget scale for the full class of nonnegative completion measurable timely completion bounties.
\end{theorem}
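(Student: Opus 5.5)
The plan is to exhibit an explicit deviation and verify it against the definition of ex-post strong delay proofness. I take the deviation in which the coalition $C$ consisting of the owners of the $\Delta+1$ cheap horizon shares withholds exactly those shares and submits everything else it would have submitted. I then proceed in three steps: (i) show that this deviation delays completion; (ii) compute the coalition's aggregate change in payoff; (iii) redistribute that aggregate surplus via side payments so that every member weakly gains and at least one strictly gains.

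For step (i), the coalition controls no other horizon share, so the withheld count is $W^\pi_{t^\star}=\Delta+1>\Delta$. By Theorem~\ref{thm:delay}, $T^\pi>t^\star$.

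For step (ii), I account for each component of the coalition's payoff separately.
\begin{itemize}
\item Fees: the coalition forfeits exactly $(\Delta+1)f$ in pre-horizon fees. Tickets are slot bound, so these fees cannot be recaptured later.
\item Bounty: the coalition forfeits the bounty on its $\Delta+1$ shares. By hypothesis this totals at most $(\Delta+1)B/\Nstar$. Because the contract is timely, no bounty is paid after $T>t^\star$, so this amount is not recovered in recovery slots.
\item Other horizon shares: the coalition owns no other horizon shares, so there is no further bounty loss from the expiry of the bounty.
\item Gains: the coalition gains the delay value, which I take to equal $L$ (the supremum, attained on the worst-case path), plus $r_{\rm rec}$ from the induced recovery path.
\end{itemize}
By the static nonnegative-marginal-fee assumption, withholding cannot raise any other revenue that the coalition would lose. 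The aggregate net change is therefore at least
\[
L+r_{\rm rec}-(\Delta+1)f-\tfrac{\Delta+1}{\Nstar}B>0 .
\]

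For step (iii), I use transferable utility. Each member is compensated for exactly its own forfeited fee and bounty out of the pool $L+r_{\rm rec}$, and the strictly positive remainder is given to one member. This satisfies the weak/strict improvement requirement of the definition. The side-payment funding constraint holds because the pool is $L+R_{\rm rec}(c)$ with $c=\Delta+1$ and $r_{\rm rec}\le R_{\rm rec}(\Delta+1)$.

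For the "consequently" clause, I combine two facts. First, the upper-bound argument in the proof of Theorem~\ref{thm:minimax} shows that every budget-$B$ contract has, on the full-submission history, some $\Delta+1$ horizon shares whose payments sum to at most $(\Delta+1)B/\Nstar$ (take the $\Delta+1$ smallest). So the hypothesis on the cheap set is met for \emph{every} contract in the class, given a path on which a coalition controls exactly those shares. Second, the worst-case recovery revenue $r_{\rm rec}$ can equal $R_1^+$, since $R_1^+$ is defined as a tight pathwise bound. Hence any $B<B^\star$ from \eqref{eq:Bstar} admits a profitable deviation, and Theorem~\ref{thm:coalition} shows $B^\star$ suffices. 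Together these establish exactness.

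I expect the main obstacle to be this last step: worst-case exactness needs both $L$ and $R_1^+$ to be attained on some admissible path. My approach is to read $L$ and $R_1^+$ as suprema over paths and deviations. If they are only approached rather than attained, I would argue with $\varepsilon$-approximating paths. Under that reading $B^\star$ remains the infimum of deterrent budgets, which is what "exact worst-case budget scale" requires.
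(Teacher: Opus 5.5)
Your proposal is correct and follows the paper's proof: take the $\Delta+1$ horizon shares with the smallest full-submission bounty (which exist by the averaging argument in Theorem~\ref{thm:minimax}), withhold exactly those to force delay via Theorem~\ref{thm:delay}, use timeliness to rule out bounty recapture, and let $L+r_{\rm rec}$ fund compensation with surplus. Your explicit side-payment split, and your remark that exactness needs $L$ and $R_1^+$ to be (approximately) attained, are care the paper leaves implicit.

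One small slip: the static nonnegative-marginal-fee assumption points the wrong way for this direction. It says withholding can only weakly \emph{lower} other revenue, which would enlarge the coalition's loss. What actually licenses your accounting is simply that in the main model a committed share earns exactly $f$ plus its bounty and nothing else.
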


\begin{proof}
By Theorem~\ref{thm:minimax}, every budget $B$ completion measurable bounty has some $\Delta+1$ horizon shares whose total bounty is at most $(\Delta+1)B/\Nstar$. A coalition controlling exactly those shares (and no other horizon share) withholds them. This deletes $\Delta+1$ contributions and delays completion by Theorem~\ref{thm:delay}. Because the coalition holds no submitted horizon share, the bounty it forfeits on delay is exactly the bounty on those $\Delta+1$ shares, at most $(\Delta+1)B/\Nstar$, together with the forfeited fees $(\Delta+1)f$ this bounds its direct pre-recovery loss above by the displayed quantity, and timeliness prevents bounty recapture after $t^\star$. If this loss is below $L+r_{\rm rec}$, the extracted value plus recovery fees can compensate all deviators with surplus.
\end{proof}

Restricting to a coalition that controls \emph{exactly} the cheap $\Delta+1$ shares is not a weakening of the bound but the reason it is tight. Because \Team\ expires the entire bounty on delay, a coalition that additionally controls any submitted horizon share forfeits that share's bounty as well, so its net cost strictly exceeds $(\Delta+1)f+\frac{\Delta+1}{\Nstar}B$. Larger coalitions are therefore strictly more deterred, and the worst case adversary is the minimal holding one of size exactly $\Delta+1$, this is the sense in which the cheapest sabotage set has size $\Delta+1$ and no larger.

\subsection{Bounty Pricing and Adaptive Ratchets}
\label{sec:pricing-dynamics}

We now leave the worst case pathwise certificate for deployment, how the bounty scales under Bayesian slack uncertainty and sender individual rationality, and how exclusion ratchets lower the required budget by imposing continuation costs on attackers.

\subsubsection{Sender pricing and individual rationality.}
The pathwise certificate costs \( B^{\rm path} = \frac{\Nstar}{\Delta+1} \bigl(L+R_1^+-(\Delta+1)f\bigr)^+ . \) If $R_1^+=mf$, then fees alone deter delay only when $(\Delta+1)f\ge L+mf$, which is impossible for positive $L$ because $\Delta+1\le m$. In the general availability model the relevant inequality is $\Delta+1\le a_{t^\star}$, the honest arrivals in the final horizon slot. Since $M_{t^\star-1}<\kappa$, we have $\Delta=M_{t^\star}-\kappa<M_{t^\star}-M_{t^\star-1}=a_{t^\star}$, the homogeneous model sets $a_{t^\star}=m$. Thus a pathwise certificate generally requires an external bounty unless recovery proposer surplus is closed by a timely fee prefix rule. With timely fee prefix, $R_1^+=0$ and fees alone deter delay whenever $(\Delta+1)f\ge L$. The \emph{timely fee prefix} is the fee variant that pays the direct fee $f$ to every admissible share publicly committed by the horizon $t^\star$, and only when completion is timely $(T\le t^\star)$; no share publicly committed after $t^\star$ is paid. The ``prefix'' is the temporal prefix of slots up to $t^\star$, not a rank prefix of the canonical order. Every horizon share is paid, so the fee, like the \Team\ bounty, prices the $\Delta$ redundant horizon shares and not only the $\kappa$ pivotal ones.

A sender need not always buy the pathwise certificate. When the cartel must fix its withholding before the slack is known, unnecessary deletions are punished on high slack states, and the resulting one shot Bayesian condition (Appendix~\ref{sec:pricing}) is typically far cheaper than the pathwise bound. A cruder posted price proxy sets the coalition's expected on-time bounty share $\beta B$ against expected delay value $(L+\E R_{\rm rec})q$, giving $B\approx ((L+\E R_{\rm rec})/\beta)q$. The sender buys the ex-post certificate only when it is individually rational, a conservative sufficient condition being $B^{\rm path}\le U_{\rm inc}-U_{\rm wh}$ (Appendix~\ref{sec:pricing}), low value transactions should use expected pricing or no bounty, while high value transactions may rationally buy the pathwise certificate.

\subsubsection{Adaptive Exclusion Ratchets.}
\label{sec:ratchet}

Agent bound tickets also give the sender an accountability signal. If a contacted agent fails to redeem its ticket, the sender can exclude that agent from future routing for the same transaction or epoch. This is a routing rule, not necessarily slashing: benign misses can be handled by buffers or probation. Its incentive effect is simple. If every withheld pre-horizon share causes continuation loss at least $\lambda$, then every delay inducing coalition pays an additional opportunity cost for each required sacrifice, and Theorem~\ref{thm:coalition} becomes
\( (\Delta+1)(f+\lambda)+\frac{\Delta+1}{\Nstar}B \ge L+R_1^+ . \) Equivalently,
\(
    B
    \ge
    \frac{\Nstar}{\Delta+1}
    \bigl(L+R_1^+-(\Delta+1)(f+\lambda)\bigr)^+ .
\) The proof is the same minimal-sabotage accounting as Lemma~\ref{lem:binding}, with $\lambda$ added per pre-horizon withholding. The ratchet also lowers future hit rates: after $F_t$ cartel agents are excluded, the eligible cartel fraction is
\(
    \beta_t=\frac{\beta n-F_t}{n-F_t}<\beta \quad(F_t>0).
\)
This probability effect is useful for expected pricing, but the robust statement is still the continuation cost inequality above. We distinguish two channels. Future exclusion imposes the per-withholding continuation cost $\lambda$ and is the basis of the robust inequality; it does not by itself prevent withholding accumulation across slots. A separate \emph{within horizon backfill} rule, in which the sender detects a no-show and contacts a replacement agent before the horizon, repairs early withholdings so that only a final slot burst can delay; this is the rule under which the single slot proxy $q_{\rm rat}=\Prb[A_{t^\star}>\Delta]$ of Section~\ref{sec:numerics} is valid.

\section{A Final Slot Visibility Impossibility}
\label{sec:micro}
The completion bounty prices multi-slot withholding. It cannot remove the final slot visibility floor. At the start of slot $t^\star$, the remaining honest deficit is
\(
    r=m-\Delta .
\) Let $V$ be the number of useful pre-seal contributions observed by a coalition in that slot, with $V=A_{t^\star}$ in the agent local coded dissemination model. Let $\rho(v,r)$ be the conditional probability that a coalition observing $v\ge r$ useful pieces can complete and act before the slot terminates.

\begin{proposition}[No completion measurable rule eliminates the final slot race]
\label{prop:micro-impossibility}
Assume that, on the event $V\ge r$, the coalition can take an extraction action before completion while still producing the same committed shares used by the bounty rule. Then no transfer rule measurable only with respect to the committed shares can reduce the physical race probability below
\(
    \Prb[\text{\emph{within-slot extraction}}]
    \ge
    \underline\rho(r)\Prb[V\ge r],
    \qquad
    \underline\rho(r)=\inf_{v\ge r}\rho(v,r).
\)
\end{proposition}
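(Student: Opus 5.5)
The argument is indistinguishability: a rule that sees only committed shares cannot tell a history in which the coalition raced from one in which it did not, so it cannot change the race probability.

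\emph{Step 1 (fix the rule and condition on the slot).} Fix an arbitrary transfer rule $\varphi$ measurable with respect to the committed shares, their schedule and admissibility, together with the coalition's full-submission behaviour. Condition on the history up to the start of slot $t^\star$ and on the realized $V=v$. On the event $V\ge r$, the coalition holds at least $r=m-\Delta$ useful pre-seal contributions. Completion in slot $t^\star$ then requires no withholding, so Theorem~\ref{thm:delay} is not engaged and the deviation is purely informational.

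\emph{Step 2 (coupling).} Compare two strategies with the same share submissions. In the \emph{honest} strategy the coalition submits all its shares and takes no action. In the \emph{racing} strategy it submits the identical set of shares at the identical times, but also reconstructs from its $v\ge r$ observed pieces and takes the extraction action before the seal. The hypothesis of the proposition says exactly that this racing strategy is feasible. The two strategies then produce the same committed-share history $h$ pathwise.

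\emph{Step 3 (the rule cannot react).} Because $\varphi$ is measurable with respect to $h$, every agent's payment is identical under the two strategies. So $\varphi$ imposes zero differential cost on racing. This holds for every nonnegative budget-feasible rule, and indeed for any rule, including ones with negative transfers, since the payment depends only on $h$. The coalition weakly prefers to race whenever the extraction value is nonnegative. The physical success of the race is governed by $\rho(v,r)$, which depends on network timing and not on $\varphi$. Hence, conditional on $V=v\ge r$, the probability of within-slot extraction is at least $\rho(v,r)\ge\underline\rho(r)$.

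\emph{Step 4 (integrate).} Averaging over $v\ge r$ and dropping the event $V<r$ gives
\[
\Prb[\text{within-slot extraction}]\ge \sum_{v\ge r}\rho(v,r)\Prb[V=v]\ge \underline\rho(r)\Prb[V\ge r].
\]
The main obstacle is Step 3: making precise that the distribution of $V$ and the function $\rho$ are independent of $\varphi$. The rule could in principle change submission behaviour in earlier slots and therefore the law of $V$. I would handle this by taking the statement as conditional on full submission up to $t^\star$, which is the on-path profile that the earlier results (Theorem~\ref{thm:coalition}) support. Under that profile $V=A_{t^\star}$ has the fixed hypergeometric law determined by the sampling process alone.
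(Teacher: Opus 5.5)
Your proof is correct and follows the paper's own indistinguishability argument. The hypothesis forces the racing and non-racing histories to yield the same committed shares, so any committed-share-measurable rule pays identically on both and cannot deter the race, leaving the probability bounded below by $\underline\rho(r)\Prb[V\ge r]$; your explicit averaging over $v\ge r$, and your remark that on the full-submission path $r=m-\Delta$ and the law of $V$ do not depend on the rule, are tightenings the paper leaves implicit.
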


\begin{proof}
On $V\ge r$, the coalition has enough pre-seal information to attempt extraction before the committed shares are determined. By assumption, whether the coalition attempted extraction is not distinguished by the committed shares on which the transfer rule is measurable. Therefore the transfer vector is identical across the extraction and non-extraction histories. Any completion measurable bounty can change incentives for withholding committed shares, but it cannot condition on the pre-completion information event itself. The residual probability is therefore bounded below by the probability of the physical event times the least conditional success probability.
\end{proof}

In the agent local coded dissemination model,
\(
    \Prb[\text{within-slot extraction}]
    \le
    \bar\rho\,\Prb[A_{t^\star}\ge r],
    \qquad
    \bar\rho=\sup_{v\ge r}\rho(v,r).
\) If $r/m>\beta$, the hypergeometric Chernoff bound gives
\(
    \Prb[A_{t^\star}\ge r]
    \le
    \exp\{-m\KL(r/m\Vert\beta)\}.
\) On the knife edge $\Delta=0$, the exact feasibility probability is
\(
    \frac{\binom{\beta n}{m}}{\binom{n}{m}}
    \le
    \beta^m .
\) Thus completion measurable rewards solve the multi-slot sabotage problem, while threshold encryption, commit reveal timing, sealing rules, or stronger network assumptions are needed to reduce $V$ or $\rho$~\cite{f3b2022,choudhuri2024bte,bormet2025beatmev}.

\section{Related Work}
\label{sec:related}
The paper is closest in spirit to team production mechanism design. Holmstr\"om's impossibility shows the tension between efficient team effort and budget balance~\cite{holmstrom1982}. Our setting avoids the budget balance obstruction because the beneficiary of completion is outside the team and can fund an external prize. Unlike Groves or Clarke mechanisms~\cite{clarke1971,groves1973}, the rule does not elicit private values. It posts a fixed completion bounty for a verifiable threshold event. Delay is a coalitional deviation and the first $\Delta$ withholdings may be individually wasteful but jointly profitable once enough agents coordinate. Our solution concept is therefore strong equilibrium style, following the tradition of strong equilibrium and coalition proof Nash equilibrium~\cite{aumann1959,bernheim1987}. The ex-post condition asks whether any coalition, with arbitrary internal side payments funded by the delay value and recovery revenue, can profitably move the outcome from timely completion to late completion. The contract is deliberately anonymous and completion measurable. It does not require a model of agent types, cartel membership, latency distributions, or transaction specific externalities. This is the sense in which the construction follows Wilson's doctrine and the robust mechanism design perspective of Bergemann and Morris~\cite{wilson1987,bergemann2005}. The exact minimax result makes this robustness formal, once restricted to nonnegative completion measurable completion bounties, the uniform timely team bounty is worst case optimal. Blockchain fee market work studies how transaction fees, proposer incentives, and strategic block production interact~\cite{roughgarden2021fees,chungshi2023}. We use that line mainly as motivation for treating direct proposer revenue $f$ and recovery revenue $R_1^+$ explicitly. Rational-consensus baiting and slashing style protocols, such as TRAP~\cite{trap2022}, use penalties or future consequences. Our baseline result does not rely on deposits or negative transfers, the ratchet extension adds a continuation cost channel separately. MEV arises from order dependent execution and transparent pre-confirmation information~\cite{eskandari2019,daian2020,qin2021flashloans,babel2023lanturn}. High throughput BFT systems increasingly separate data dissemination from ordering and allow many validators to produce data concurrently, as in Narwhal/Tusk, DAG-Rider, and multiple concurrent-proposer designs~\cite{narwhal2022,dagrider2021,garimidi2025mcp}. The coding layer is rooted in information dispersal, rateless codes, AVID, and data availability sampling~\cite{rabin1989,luby2002,shokrollahi2006,cachin2005avid,alhaddad2022avid,albasam2018da,yu2019cmt,buterin2024danksharding}. Fair ordering and threshold encrypted mempools address related visibility and sequencing problems by different means~\cite{kelkar2020orderfairness,themis2023,speedyfair2024,quickorderfairness2022,f3b2022,choudhuri2024bte,bormet2025beatmev}. Our contribution is the incentive theory of threshold completion with slack.

\section{Conclusion}
Threshold completion with slack $\Delta$ creates a simple but sharp incentive problem. If $\Delta=0$, one withheld share delays completion. If $\Delta>0$, delay requires coordinated sabotage of at least $\Delta+1$ opportunities. A sender funded completion bounty can price this sabotage because the sender is an outside beneficiary who can break budget balance. The uniform timely team bounty \Team\ is optimal among nonnegative completion measurable completion bounties: with recovery accounting, the exact worst case budget scale is
\(
    B^\star
    =
    \frac{\Nstar}{\Delta+1}
    \bigl(L+R_1^+-(\Delta+1)f\bigr)^+ .
\)
Coded dissemination is one implementation of the model, alongside threshold signature collection, data-availability certification, and generic $k$-of-$n$ completion tasks. These results translate into a concrete design. The sender should provision strictly positive slack rather than set $m\mid\kappa$, bind every contribution to an agent identity or ticket so that admissibility is publicly verifiable, and pay all timely horizon contributions uniformly under \Team. Closing the recovery fee leak with a timely fee prefix sets $R_1^+=0$ and removes the bounty entirely in the fee-only region. The within-slot race of Section~\ref{sec:micro} falls outside this design and must be handled by sealing, commit-reveal, or threshold encryption rather than by the transfer rule.

\clearpage
\appendix
\renewcommand{\theHsection}{appendix.\Alph{section}}

\section{\Sedna\ Instantiation Details}
\label{app:sedna-specifics}

This appendix records the implementation details for the coded dissemination instantiation. They are not needed for the abstract threshold team production theorem, but they matter for deploying the mechanism in \Sedna. A \Sedna\ transaction consists of a public header, a transaction identifier $\id$, and an encoded payload. The sender ratelessly encodes the payload into a stream of symbols. In slot $t$, the sender contacts $m$ of the $n$ proposer agents and sends one fragment to each contacted agent. A fragment contains $s$ fresh symbol indices and symbols, together with an agent bound ticket. The payload decodes once $K$ distinct admissible symbols have been committed. In the fragment-level analysis we write
\[
    \kappa=\ceil{K/s}.
\]
When $K=\kappa s$, exactly $\kappa$ admissible shares suffice. Symbol level rounding $(\kappa=\lceil K/s\rceil)$ changes which shares are useful for decoding but not the bounty accounting, which is stated over the $N^\star$ horizon fragments; the delay condition remains $W_{t^\star}>\Delta$. Each agent $i$ has a reward address $\addr(i)$ for the relevant epoch. A fragment sent to agent $i$ in slot $t$ carries a ticket of the form
\[
    \ticket
    =
    \Sig_{\rm sender}
    \bigl(
        \id,t,i,\hash(\mathrm{body}),p,\addr(i),\mathrm{nonce}
    \bigr),
\]
where $p$ is the fee ticket and the nonce is unique. A fragment occurrence is admissible only if the ticket is valid, unspent, tied to agent $i$ and slot $t$ or to an explicit validity window, and signed by the corresponding agent address. Non-admissible copies do not count for decoding, direct fees, or bounty payments. This is the cryptographic plumbing that prevents bounty sniping, a proposer that copies another agent's symbols without a fresh agent bound ticket has not produced an admissible share. If duplicate symbol indices appear, \Sedna\ resolves them using a deterministic public order, for example lexicographic order by
\[
    (t,i,\hash(\ticket)).
\]
The important property is that the order is determined by committed shares and the sender's randomness, not by a proposer-controlled choice after seeing the transaction. Thus an agent's strategic action is inclusion or withholding of an admissible fragment, not manipulation of its rank within the bounty rule. The abstract team production game has agents, share opportunities, a completion threshold, and slack. In \Sedna:
\[
\begin{array}{rcl}
\text{agents} &\leftrightarrow& \text{proposer agents},\\
\text{share opportunity} &\leftrightarrow& \text{admissible agent-bound fragment},\\
\kappa &\leftrightarrow& \text{fragment threshold for decoding},\\
\Nstar &\leftrightarrow& t^\star m \text{ horizon fragment opportunities},\\
\Delta &\leftrightarrow& t^\star m-\kappa.
\end{array}
\]
Under full submission, the decode horizon is
\[
    t^\star=\ceil{\kappa/m},
    \qquad
    \Nstar=t^\star m,
    \qquad
    \Delta=\Nstar-\kappa.
\]
If a coalition withholds $W_{t^\star}$ admissible shares before $t^\star$, then the committed fragment count at the horizon is
\[
    \Nstar-W_{t^\star}
    =
    \kappa+\Delta-W_{t^\star}.
\]
Therefore public decoding is delayed past the honest horizon if and only if
\[
    W_{t^\star}>\Delta.
\]
This is exactly Theorem~\ref{thm:delay} specialized to coded dissemination. The ticket format, symbol resolution order, and fragment level rounding are \Sedna-specific. The incentive theorem is not. Once admissible shares are verifiable and the horizon has $\Nstar=\kappa+\Delta$ share opportunities, the optimal timely team bounty and its lower bound depend only on $(\kappa,\Delta,f,L,R_1^+)$, not on the fact that the underlying object is a coded transaction payload.

\section{A Sedna Native Prefix Bounty Variant}
\label{app:pivotk}
The main text uses the uniform timely team bounty \Team, which pays all admissible horizon contributions when completion is timely. This appendix records the \Sedna~native prefix bounty rule \PivotK. The rule is useful as an implementation benchmark and as a clean rank based mechanism, but it should not be confused with the globally optimal completion-measurable bounty in Theorem~\ref{thm:minimax}.

\paragraph{Definition of \PivotK.}
Suppose the transaction decodes by the honest horizon $t^\star$. The protocol scans committed admissible fragment occurrences in the deterministic symbol resolution order and identifies the first $K$ distinct symbol indices used for decoding. Let $I^\star$ denote this pivotal index set. For each $j\in I^\star$, let $\mathrm{agent}(j)$ be the agent whose admissible fragment first contributed index $j$. The \PivotK\ rule pays
\[
    \frac{B}{K}
\]
to $\mathrm{agent}(j)$ for each $j\in I^\star$, and pays no bounty to non-pivotal symbols. If $T>t^\star$, the bounty expires and no bounty is paid. Under the exact divisibility convention $K=\kappa s$, each pivotal fragment contributes $s$ pivotal symbols. Hence \PivotK\ is equivalently the fragment level rule that pays
\[
    \frac{B}{\kappa}
\]
to each of the first $\kappa$ admissible shares in the canonical order, provided decoding occurs by $t^\star$.

\paragraph{Direct loss under \PivotK.}
At the honest horizon there are $\kappa+\Delta$ admissible fragments. Under \PivotK, only $\kappa$ of them are paid the bounty, the remaining $\Delta$ are redundant for both decoding and the prefix bounty. Therefore, if a coalition withholds $c$ horizon fragments, its pre-recovery direct loss is at least
\[
    c f+\frac{(c-\Delta)^+}{\kappa}B .
\]
For a minimal delay attack, $c=\Delta+1$, this becomes
\[
    (\Delta+1)f+\frac{B}{\kappa}.
\]

\paragraph{Within-class minimax property.}
A prefix linear pivotal bounty chooses nonnegative weights
\[
    \omega_1,\ldots,\omega_\kappa,
    \qquad
    \sum_{j=1}^{\kappa}\omega_j=1,
\]
and pays $\omega_jB$ to the $j$-th pivotal fragment. Let $\omega_{(1)}\le\cdots\le\omega_{(\kappa)}$ be the nondecreasing rearrangement and define
\[
    S_d(\omega)=\sum_{i=1}^{d}\omega_{(i)}.
\]
Then for every $d\in\{1,\ldots,\kappa\}$,
\[
    S_d(\omega)\le \frac{d}{\kappa}.
\]
The uniform rule $\omega_j=1/\kappa$ attains equality. Thus \PivotK\ maximizes the worst case bounty forfeiture from deleting any fixed number $d$ of pivotal ranks among anonymous prefix linear pivotal rules.

\begin{proof}
Let $a=S_d(\omega)/d$ be the average of the $d$ smallest weights. Every remaining weight is at least $a$, so
\[
    1=\sum_{j=1}^{\kappa}\omega_j
    \ge
    \kappa a.
\]
Therefore $S_d(\omega)=da\le d/\kappa$. Equality forces all weights to equal $1/\kappa$.
\end{proof}

\paragraph{Comparison with the timely team bounty.}
The distinction is important. \PivotK\ is minimax only after the designer commits to paying the pivotal prefix. It leaves the $\Delta$ redundant horizon fragments unpaid. A worst case minimal sabotage set can therefore delete $\Delta$ redundant fragments and one pivotal fragment, burning only one bounty share.

By contrast, \Team\ pays every admissible horizon contribution $B/\Nstar$ when completion is timely. A minimal sabotage set of size $\Delta+1$ therefore burns
\[
    \frac{\Delta+1}{\Nstar}B
\]
in bounty. The corresponding pathwise budget scales are
\[
    B_{\PivotK}^{\star}
    =
    \kappa\bigl(L+R_1^+-(\Delta+1)f\bigr)^+,
\]
whereas
\[
    B_{\Team}^{\star}
    =
    \frac{\Nstar}{\Delta+1}
    \bigl(L+R_1^+-(\Delta+1)f\bigr)^+.
\]
For $\Delta=0$, the two coincide because $\Nstar=\kappa$. For every positive slack instance with $\kappa>1$,
\[
    \frac{\Nstar}{\Delta+1}
    =
    \frac{\kappa+\Delta}{\Delta+1}
    <
    \kappa,
\]
so \Team\ gives a strictly smaller worst case budget certificate.

\section{Parameter Choice and Bounty Pricing}
\label{sec:pricing}

The preceding sections give a pathwise stability certificate and the main text already separates the within-slot residual. This appendix turns the multi-slot result into sender guidance. There are two pricing quantities, the \emph{ex-post certificate} needed to rule out every delay inducing coalition on every possible contact path, and the \emph{expected price} a risk neutral sender might post when attacks occur only on rare paths.

\subsection{Slack as an incentive lever}

Fix a target latency horizon $h\ge 1$ and a contact rate $m$. All thresholds $\kappa$ with $t^\star=h$ can be written as
\[
    \kappa=hm-\Delta,
    \qquad \Delta\in\{0,\ldots,m-1\}.
\]
For this fixed horizon, write $L_h=\alpha v\gamma^h$ and let $R_{1,h}^+$ be the single recovery slot fee upper bound used for this latency target. The ex-post pivotal bounty threshold from Theorem~\ref{thm:coalition} is
\begin{equation}
\label{eq:path-bounty-delta}
    B^{\rm path}_h(\Delta)
    =
    \frac{hm}{\Delta+1}
    \bigl(L_h+R_{1,h}^+-(\Delta+1)f\bigr)^+ .
\end{equation}

\begin{theorem}[Slack dominance for multi-slot incentive cost]
\label{thm:slack-dominance}
For fixed $h,m,L_h,R_{1,h}^+$ and $f$, the function $B^{\rm path}_h(\Delta)$ is weakly decreasing in $\Delta$ on $\{0,\ldots,m-1\}$. Thus, subject to a fixed latency horizon and a fixed recovery fee bound, larger slack weakly reduces the ex-post bounty needed to deter multi-slot withholding. The zero slack knife edge is the most expensive point.
\end{theorem}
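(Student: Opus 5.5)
Since the $\Delta$-dependence separates cleanly, the plan is to write $B^{\rm path}_h(\Delta)$ as a product of two nonnegative factors and show each is weakly decreasing in $\Delta$. Set $x=\Delta+1\in\{1,\ldots,m\}$ and $K=L_h+R_{1,h}^+\ge 0$, both of which are fixed along with $h,m,f$. Then
\[
    B^{\rm path}_h(\Delta)=g(x)\,\phi(x),\qquad g(x)=\frac{hm}{x},\qquad \phi(x)=\bigl(K-xf\bigr)^+ .
\]
The first factor $g$ is positive and strictly decreasing in $x$, because $hm>0$ is fixed. The second factor $\phi$ is the positive part of an affine function with slope $-f<0$, so it is nonnegative and weakly decreasing in $x$.

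The monotonicity step is then routine. For $x<x'$ I would write
\[
    g(x')\phi(x')\le g(x')\phi(x)\le g(x)\phi(x),
\]
using $g(x')\ge 0$ together with $\phi(x')\le\phi(x)$ in the first inequality, and $\phi(x)\ge 0$ together with $g(x')\le g(x)$ in the second. This gives the weak decrease on the whole range $\{0,\ldots,m-1\}$. It is convenient to split into two regimes:
\begin{itemize}
\item once $xf\ge K$, the bounty is identically zero and stays zero for all larger slack;
\item before that point, both factors are strictly positive, so the product is in fact strictly decreasing.
\end{itemize}
The ``most expensive point'' claim follows at once: the maximum over the range is attained at $\Delta=0$, where the value is $hm\,(K-f)^+$.

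The main obstacle is interpretive rather than computational. I need to check that holding $h$ fixed while varying $\Delta$ really leaves $\Nstar=hm$, $L_h$ and $R_{1,h}^+$ unchanged. By the parametrization $\kappa=hm-\Delta$ with $\Delta\le m-1$, we have $\lceil\kappa/m\rceil=h$, so the horizon and $\Nstar=t^\star m=hm$ are indeed constant. The quantities $L_h$ and $R_{1,h}^+$ are held fixed by hypothesis. I would add a remark that the statement is a comparison only within a fixed horizon: the sawtooth noted after Proposition~\ref{prop:kl} is not contradicted, because crossing to a new $h$ changes $hm$ and $L_h$.
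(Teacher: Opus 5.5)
Your proof is correct and follows essentially the same route as the paper. The paper rewrites the active-region value as $hm\bigl(\tfrac{A}{\Delta+1}-f\bigr)$ and observes that once the positive part vanishes it stays zero. You package those same two facts as a product of nonnegative, weakly decreasing factors, which also handles the transition between the two regimes in a single inequality.
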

\begin{proof}
Let $A=L_h+R_{1,h}^+$. On the active region,
\(
    B^{\rm path}_h(\Delta)=hm\bigl(\tfrac{A}{\Delta+1}-f\bigr),
\)
which is weakly decreasing in $\Delta$. Once the positive part becomes inactive the function is zero and remains zero for larger $\Delta$. Hence the pathwise bounty threshold is weakly decreasing in slack.\end{proof}

Theorem~\ref{thm:slack-dominance} gives a formal basis for preferring strictly positive slack. The worst case bounty $B^{\rm path}_h(\Delta)$ is weakly decreasing in $\Delta$, so the zero slack configuration $m\mid\kappa$ is the most expensive point and should be selected only when the bounty is already priced for a unilateral attack. However, slack is not free. The final slot deficit is
\[
    r=m-\Delta,
\]
so the agent local within-slot feasibility probability is
\[
    q_{\rm micro}(\Delta)=\Prb[A_{t^\star}\ge m-\Delta].
\]
This tail is weakly increasing in $\Delta$, because the required number of cartel contacts falls as slack rises. Thus slack moves the two risks in opposite directions: it makes multi-slot withholding cheaper to deter, but it makes pre-seal decoding easier if timing is loose. This is a useful separation. \Pivot\ and the ratchet price the multi-slot sabotage game; sealing deadlines, commit-reveal, or encryption must handle the within-slot visibility floor.

The recovery-fee term changes the fee only message. In the default pathwise fee-leak model $R_{1,h}^+=mf$, so $(\Delta+1)f\ge L_h+mf$ is impossible for any positive $L_h$ because $\Delta+1\le m$. Thus, with ordinary recovery fees left open, the pivotal bounty is always load bearing in a pathwise certificate. The familiar fee only region returns only if the fee leak is closed: under the timely fee prefix variant $R_{1,h}^+=0$, so no bounty is needed whenever $(\Delta+1)f\ge L_h$. For expected pricing, one may instead use $\E R_1^+\approx\beta mf$, yielding a non-empty fee only region when $(\Delta+1-\beta m)f\ge L_h$.

\subsection{Bayesian slack uncertainty}

The ex-post analysis grants the cartel exact knowledge of $\Delta$. That is conservative. In coded dissemination, $\kappa$ can depend on the hidden payload size and the sender's symbolization choice, so a cartel may need to choose its first withholding action before knowing how much slack exists. The next proposition gives the corresponding one-shot Bayesian condition.

Let $\mu$ be the cartel's prior over possible pairs $(\Delta,\kappa)$ at the moment it chooses how many addressed share opportunities to withhold. Consider a one-shot deletion plan that withholds $c$ opportunities before learning the realized slack. This abstraction is pessimistic in two ways: it assumes the cartel has enough contacts to implement the chosen $c$, and it ignores ratchet continuation costs.

\begin{proposition}[Bayesian no withholding condition under slack uncertainty]
\label{prop:bayes-slack}
Under timely \Pivot, the expected gain from a one-shot plan that withholds $c\ge1$ share opportunities is at most
\[
    (L+R_1^+)\,\Pr_\mu[\Delta<c]
    +f\,\E_\mu\!\left[((c-\Delta)^+-1)^+\right]
    -cf
    -B\,\E_\mu\!\left[\frac{c}{\Nstar}\right].
\]
Therefore full submission is Bayesian incentive compatible against all such one-shot deletion plans if, for every $c\ge1$,
\begin{equation}
\label{eq:bayes-slack-ic}
    cf+B\,\E_\mu\!\left[\frac{c}{\Nstar}\right]
    \ge
    (L+R_1^+)\Pr_\mu[\Delta<c]
    +f\,\E_\mu\!\left[((c-\Delta)^+-1)^+\right].
\end{equation}
\end{proposition}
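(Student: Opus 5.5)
The plan is to fix a realization $(\Delta,\kappa)$ drawn from $\mu$, bound the coalition's payoff change from the one-shot plan pathwise on that realization, and then take expectations under $\mu$. The coalition chooses $c$ before learning the slack, so $c$ is deterministic and linearity of expectation applies directly. The bound in the statement is the expectation of four realization-wise terms. These are the delay value, the recovery fee credit, the forfeited fees, and the forfeited bounty. I would establish each term separately and then sum.

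\emph{Cost side.} By Lemma~\ref{lem:loss}, withholding $c$ admissible horizon opportunities costs at least $cf+\frac{c}{\Nstar}B$ on every realization. Slot binding makes the fees $f$ unrecoverable after $t^\star$. The bounty share $B/\Nstar$ is lost for two reasons: on a delayed path the bounty expires, and on a timely path the withheld share is simply not paid. This cost holds whether or not delay occurs, so its expectation is $cf+B\,\E_\mu[c/\Nstar]$. The factor $\Nstar$ is random, which is why it stays inside the expectation.

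\emph{Gain side.} By Theorem~\ref{thm:delay}, delay occurs exactly when $c>\Delta$. Only on that event can the coalition collect delay value, which is at most $L$, or any recovery revenue; on the complementary event both gains vanish. This contributes $L\Pr_\mu[\Delta<c]$. For recovery, on $\{\Delta<c\}$ the deficit is $d=c-\Delta$, and Lemma~\ref{lem:recovery-farming} bounds the revenue by $(d-1)f+R_1^+$. Since $d\ge1$ there, I would write $d-1=((c-\Delta)^+-1)^+$ and $R_1^+\mathbf 1\{\Delta<c\}$. The resulting expression also equals zero off the event, because $(c-\Delta)^+=0$ gives $((0-1))^+=0$. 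Taking expectations gives $R_1^+\Pr_\mu[\Delta<c]+f\,\E_\mu[((c-\Delta)^+-1)^+]$. Summing the gain and cost terms gives exactly the displayed upper bound on expected gain.

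\emph{Conclusion and main obstacle.} Full submission is Bayesian incentive compatible against a plan $c$ when its expected gain is $\le 0$, and requiring this for every $c\ge1$ is precisely \eqref{eq:bayes-slack-ic}. The main obstacle is the recovery term. It requires that $R_1^+$ and $L$ be uniform bounds valid across every realization in the support of $\mu$, so that they can be pulled out of the expectation. It also requires that side payments only redistribute the total, so the coalitional gain is the aggregate. I would handle this by stating that $L$ and $R_1^+$ are taken as the supremum over the support of $\mu$, consistent with their pathwise definitions. The pessimistic assumptions in the statement, that the cartel can implement $c$ and that ratchet costs are ignored, only enlarge the gain, so they do not threaten the bound.
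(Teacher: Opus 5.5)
Your proposal is correct and follows essentially the same route as the paper. The paper also fixes the realized $(\Delta,\kappa)$, bounds delay value by $L\mathbf 1\{\Delta<c\}$, bounds recovery fees via Lemma~\ref{lem:recovery-farming} by $R_1^+\mathbf 1\{\Delta<c\}+f((c-\Delta)^+-1)^+$, charges $cf+cB/\Nstar$ on every realization whether or not delay occurs, and takes expectations over $\mu$. Your remark that $L$ and $R_1^+$ must be uniform over the support of $\mu$ (for instance taken as suprema, since $L=\alpha v\gamma^{t^\star}$ varies with $\kappa$) is left implicit in the paper, and it is the right way to justify pulling them out of the expectation.
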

\begin{proof}
Given realized $\Delta$, withholding $c$ opportunities delays completion only when $d=(c-\Delta)^+>0$, giving MEV at most $L\mathbf 1\{\Delta<c\}$. By Lemma~\ref{lem:recovery-farming}, recovery behavior contributes at most $R_1^+\mathbf 1\{\Delta<c\}+f(d-1)^+$ in incremental cartel fees. The plan always forfeits $cf$ in direct fees. Under \Team, each withheld horizon contribution also forfeits its full submission bounty share $B/\Nstar$, whether or not the attempted deletion is sufficient to cause delay. Taking expectations over $\mu$ gives the displayed upper bound.
\end{proof}

Condition~\eqref{eq:bayes-slack-ic} can be much cheaper than the ex-post bound because unnecessary deletions are punished on states with high slack. In particular, if the prior puts significant probability on $\Delta\ge c$, the cartel pays the fee cost $cf$ while receiving no delay benefit. This is the formal version of the intuition that hidden dissemination parameters help the sender: the cartel must either under-withhold and fail on low slack states or over-withhold and burn revenue on high slack states.

\paragraph{Posted bounty and sender individual rationality.}
If the withholding proof threshold is a random variable $B^*(\omega)$ with CDF $F$, the sender's expected utility from posting $B$ is $F(B)(U_{\rm inc}-B)+(1-F(B))U_{\rm wh}$. This gives a standard posted price tradeoff: routine transactions may use expected pricing, while high value transactions should pay for the ex-post certificate when it is individually rational. 

\section{Illustrative Calibration}
\label{sec:numerics}
For $n=100$, $m=20$, and $\beta=0.2$, exact hypergeometric tails show the sawtooth predicted by the theory. At the knife edge $\kappa=20$ ($t^\star=1$, $\Delta=0$), full withholding delays with probability $0.993$; at $\kappa=30$ ($t^\star=2$, $\Delta=10$), the static delay probability is $0.136$, while the one shot ratchet proxy is $8.0\times10^{-5}$. The expected pricing proxy from Section~\ref{sec:coalitions} is $B_{\rm exp}\approx ((L+\E R_{\rm rec})/\beta)q$. For the back-of-envelope table we take $\gamma\approx1$ and drop the recovery credit, giving the simplified $(\alpha v/\beta)q$ rule; with protected value-at-risk $\alpha v=100$ and $\beta=0.2$, this falls from about $68$ under static routing to about $0.04$ under the ratchet proxy.

This is a heuristic for ordinary pricing, not the pathwise certificate in Theorem~\ref{thm:coalition}. The exact hypergeometric probabilities are below. Here $q^{(0)}=\Prb[S_{t^\star}>\Delta]$ is the full withholding multi-slot delay probability. The ratchet proxy $q_{\rm rat}=\Prb[A_{t^\star}>\Delta]=\Prb[\mathrm{Hypergeom}(n,\beta n,m)>\Delta]$ assumes the within-horizon backfill rule of Section~\ref{sec:ratchet}, under which only a final-slot burst exceeding the slack can delay completion. Since slots are i.i.d., $A_{t^\star}\overset{d}{=}A_1$, so the proxy depends only on $\Delta$, not on $t^\star$. This is a deployment heuristic, not a pathwise certificate; the robust statement remains the continuation cost inequality of Section~\ref{sec:ratchet}. Note that pure future exclusion does \emph{not} give single slot behavior, since cartel members not yet selected remain eligible and can withhold in later slots; under exclusion alone the honest delay probability stays near $q^{(0)}$. Finally $q_{\rm micro}=\Prb[A_{t^\star}\ge r]$ is the within-slot feasibility probability under $V=A_{t^\star}$ and $\bar\rho=1$.
\begin{center}
\begin{tabular}{@{}rrrrccc@{}}
\toprule
$\kappa$ & $t^\star$ & $\Delta$ & $r$ & $q^{(0)}$ & $q_{\rm rat}$ & $q_{\rm micro}$ \\
\midrule
10  & 1 & 10 & 10 & $8.0\!\times\!10^{-5}$ & $8.0\!\times\!10^{-5}$ & $6.5\!\times\!10^{-4}$ \\
20  & 1 & 0  & 20 & $0.993$ & $0.993$ & $1.9\!\times\!10^{-21}$ \\
30  & 2 & 10 & 10 & $0.136$ & $8.0\!\times\!10^{-5}$ & $6.5\!\times\!10^{-4}$ \\
50  & 3 & 10 & 10 & $0.699$ & $8.0\!\times\!10^{-5}$ & $6.5\!\times\!10^{-4}$ \\
100 & 5 & 0  & 20 & $\approx1$ & $0.993$ & $1.9\!\times\!10^{-21}$ \\
\bottomrule
\end{tabular}
\end{center}

\section{Proof Details for the KL Bounds}
\label{app:kl}
Let $A\sim\mathrm{Hypergeom}(n,\beta n,m)$. Sampling without replacement is more concentrated than sampling with replacement, so for $\lambda>0$,
\[
    \E[e^{\lambda A}]\le (1-\beta+\beta e^\lambda)^m .
\]
For $S_t=\sum_{u\le t}A_u$, independence across slots gives
\[
    \E[e^{\lambda S_t}]
    \le (1-\beta+\beta e^\lambda)^{tm}.
\]
Chernoff's inequality and optimization over $\lambda$ yield
\[
    \Prb[S_t\ge \theta tm]\le \exp\{-tm\KL(\theta\Vert\beta)\},
    \qquad \theta>\beta.
\]
The lower tail bound follows analogously with $\lambda<0$.


\section{Stationary Partial Withholding Proxy}
\label{app:stationary}
A simpler proxy lets the cartel publicly commit a fraction $w\in[0,1]$ of its addressed shares. The committed count is
\[
    U_t(w)=tm-(1-w)S_t.
\]
At $t^\star$, delay occurs iff $(1-w)S_{t^\star}>\Delta$. Defining
\[
    \theta_w=\frac{\Delta}{(1-w)t^\star m},
\]
one obtains the same KL upper or lower tail bounds as Proposition~\ref{prop:kl} whenever $\theta_w\in(0,1)$. The proxy is useful for numerical pricing but is not the equilibrium model, dynamic minimal sabotage is cheaper than stationary withholding.

\section{Minimal Sabotage and Net Marginal Fees}
\label{app:negative-fees}
Theorem~\ref{thm:coalition} assumes that publicly committing an additional admissible pre-horizon share has nonnegative net marginal payoff before considering MEV, and it explicitly accounts for post-horizon recovery fees through $R_1^+$ and Lemma~\ref{lem:recovery-farming}. The nonnegative pre-horizon condition holds if the share carries a reserved fee ticket, if the agent has reserved bandwidth for coded shares, or if the ticket price covers any displacement cost. If publicly committing a coded share displaces a more valuable alternative transaction, minimal sabotage can fail: conditional on already causing delay, a cartel may prefer to withhold additional shares to free capacity. The general condition needed for the main theorem is therefore that every extra pre-horizon admissible inclusion weakly increases direct payoff and weakly increases the cartel's timely pivotal count. Under \Team, the second part is pathwise because every additional admissible pre-horizon inclusion by the cartel weakly increases the cartel's paid horizon contribution count whenever completion is timely.

\end{document}